\documentclass[letterpaper]{article} 
\usepackage{aaai2026}  
\usepackage{times}  
\usepackage{helvet}  
\usepackage{courier}  
\usepackage[hyphens]{url}  
\usepackage{graphicx} 
\usepackage{natbib}  
\usepackage{caption} 
\usepackage{algorithm}

\usepackage{newfloat}
\usepackage{listings}
\DeclareCaptionStyle{ruled}{labelfont=normalfont,labelsep=colon,strut=off} 
\floatstyle{ruled}
\newfloat{listing}{tb}{lst}{}
\floatname{listing}{Listing}
\usepackage{latexsym}

\usepackage{multirow}%
\usepackage{amssymb,amsfonts}%
\usepackage{dsfont}
\usepackage{mathrsfs}%
\usepackage{xcolor}%
\usepackage{textcomp}%
\usepackage{manyfoot}%
\usepackage{algorithmicx}%
\usepackage{algpseudocode}%
\usepackage{amsmath}
\usepackage{verbatim}
\usepackage{booktabs}
\usepackage{tabularx}
\usepackage{makecell}
\usepackage{ragged2e}
\usepackage{enumitem} 
\usepackage{adjustbox}
\usepackage{lmodern}
\usepackage{caption}
\usepackage{subcaption}

\usepackage{longtable}
\usepackage{xspace}

\newtheorem{definition}{Definition}
\newtheorem{proposition}{Proposition}

\newcommand\restr[2]{{
  \left.\kern-\nulldelimiterspace 
  #1 
  \littletaller 
  \right|_{#2} 
  }}

\newcommand{\R}{\mathbb{R}}

\newcommand{\Rv}{{\pmb{R}}}
\newcommand{\A}{\mathcal{A}}
\newcommand{\T}{\mathcal{T}}

\newcommand{\abs}[1]{{\left|#1\right|}}

\DeclareMathOperator*{\argmin}{arg\,min}

\newcommand{\VS}{\text{VS}\xspace}
\newcommand{\DS}{\text{DS}\xspace}
\newcommand{\loss}{\mathcal{L}}
\newcommand{\Ultra}{\textsc{UltraFeedback}\xspace}
\newcommand{\PKU}{\textsc{PKU-Align-Anything}\xspace}
\newcolumntype{Y}{>{\Centering\arraybackslash}X}

\title{A Method for Learning Value Systems in Generative AI}
\author{
    Andrés Holgado Sánchez\textsuperscript{\rm 1}\\
    Holger Billhardt\textsuperscript{\rm 1}\equalcontrib,
    Sascha Ossowski\textsuperscript{\rm 1}\equalcontrib
}
\affiliations{
    \textsuperscript{\rm 1}CETINIA, University Rey Juan Carlos\\

    Unnumbered Tulipán Street,
    Móstoles, 28933 Madrid, Spain\\
}

\usepackage{bibentry}

\begin{document}

\maketitle

\begin{abstract}
Value‑aware AI systems require explicit computational representations of human values (\emph{groundings}) and their aggregation into value systems in order to align their decisions with ours. As such representations are difficult to elicit, \emph{value learning} seeks to infer them by observing human behaviour.
This work addresses the lack of grounded value learning methods in generative AI: existing approaches typically replicate human preferences without awareness of the multidimensional structure of value alignment, or lack principled value system elicitation methods.
To address these gaps, we adapt a previously validated \emph{value system learning} method to the generative AI setting, which, based on pairwise prompt-response preference data, simultaneously learns: i) an implementation of a grounding for a set of values given by a multi‑objective reward model, and ii) a value system representation in the form of a weighted linear scalarization of the previous grounding model. To ensure that the learned value systems are based on coherent value representations, our algorithm dynamically prioritizes the grounding learning process. 
We evaluate the method against baselines and a contemporary method on prompt-response preference datasets. Results show competitive performance and minimal trade‑offs against the baselines, while improving explainability.
\end{abstract}


\section{Introduction}

Aligning AI systems with human intentions has long been recognized as a central challenge in artificial intelligence~\cite{Anderson2006}. With the widespread deployment of generative AI models --predominantly Large Language Models (LLMs)-- the need for reliable mechanisms to steer their behaviour has become increasingly pressing. The dominant paradigm for this purpose is Reinforcement Learning from Human Feedback (RLHF)~\cite{christiano2023deeprlpreferences}, which aligns a generative model with human preferences by (i) learning a reward model that estimates the \textit{alignment} of its generations with human desires, and (ii) fine-tuning the generative model to maximize this reward using reinforcement learning. 

\begin{figure}
    \centering
    \includegraphics[width=1.0\linewidth]{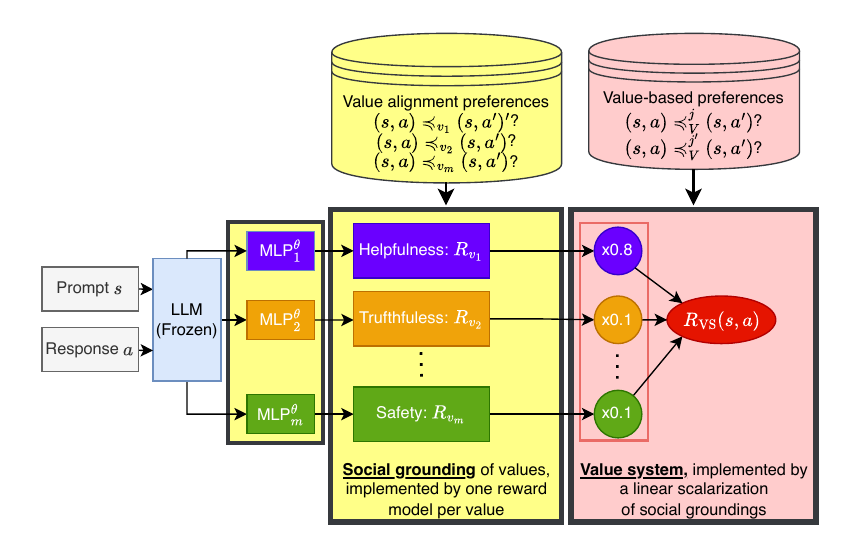}
    \caption{Diagram of the proposed value system learning solution. Based on pairwise preference data reflecting value alignment with multiple human values and the value-based preferences of multiple agents, we learn both a grounding for these values implemented by a multi‑objective reward model and a value system representation in the form of linear scalarization weights.}
    \label{fig:architecture}
\end{figure}
A fundamental limitation of this approach is its inability to recognise the \textit{pluralistic} nature of alignment~\cite{sorensenpluralisticvalues2024valuekleidoscopeanddatasetVALUEPRISM}, i.e. they do not decompose the \textit{value alignment}~\cite{Russell2022alignmentDefinition} of actions in terms of different \emph{human values}. 
The field of \emph{value awareness engineering} argues that AI systems should explicitly represent both values and the value-based preferences of agents (here called \textit{value systems}) in order to be able to reason about them~\cite{valueengineeringAutonomous2023}.
With regard to values, the manual design of value representations is difficult and prone to misspecification~\cite{Sumers2022InstructionsAndDescriptions}, especially if these representations are based on reward models~\cite{inverseRewardHadfieldMisspecification}. As an alternative, authors have proposed \textit{value learning}~\cite{Soares2018ValueLearningProblem} as a means to acquire value representations --hereafter, groundings~\cite{andres2026JAAMASfinal}-- automatically from stated preferences or demonstrations of behaviour. This idea has been explored in some decision-making contexts, where utility or reward functions represent morally relevant concepts and are subsequently used as decision objectives~\cite{leike2020,rodriguez2026reinforcementEthicalEmbedingWeightsRewardRL}.

Following this idea, some works in the generative AI domain (e.g.~\cite{dong-etal-2023-steerlm,wang-etal-2024-arithmetic,Wang2024MAP:Palette}) model alignment with multiple values using independent reward functions, employing a \textit{multi-objective alignment}~\cite{Vamplew2018Human-alignedProblem} framework. 
With regard to modelling the value systems of users, 
in those works it is assumed that users explicitly state their value systems
directly (in the form of value weights or constraints). However, there is evidence that users are often unable to communicate their value system reliably~\cite{Siebert2022liscio}. 
 Other methods approximate a Pareto frontier of aligned models for users to choose from~\cite{zhou-etal-2024-beyondMODPO,robustOnlineDPO}, avoiding directly asking for the users' value systems. However, this approach incurs in additional computational effort and user interactions. Finally, there are approaches that infer value systems from user data. Still, they tend to propose latent~\cite{li-etal-2025-gradient} or opaque value system representations~\cite{wang-etal-2024-interpretableRewardModelingLLM} that have limited interpretability in terms of objectives/values. 


To address these shortcomings, we propose a method for learning an approximation of a complete value system representation suited for preference modelling in generative AI applications, by means of a multi-objective decision making framework. Specifically, we learn a grounding of value labels implemented using reward models, and approximate value systems as weighted linear scalarizations over the previous value representations. Our algorithm jointly learns both representations from prompt--response preference pairs expressed in terms of the alignment of each response with each value, as well as of the alignment with the value system of a user, as illustrated in Figure~\ref{fig:architecture}.
Notably, our optimization dynamically prioritizes the grounding learning process over the value system one, as careful value recognition is a precondition for value system estimation~\cite{Liscio2023BlueskyTrackValueInference}.

This paper is organized as follows. In Section~\ref{sec:relatedwork} we survey related works. In Section~\ref{sec:representingvaluesystems} we propose our value alignment and value system representation approach for generative AI models. In Section~\ref{sec:vslearning} we formalize our \textit{value system learning} problem, and propose a theoretically sound algorithm that approximates a solution to it. Lastly, in Section~\ref{sec:evaluation}, we evaluate our proposal and in Section~\ref{sec:conclusions} we provide conclusions, limitations and avenues for future work.


\section{Related Works}\label{sec:relatedwork}

\subsection{Values in Generative AI} 
Recent work on AI alignment acknowledges that users demand different behavioural properties from generative AI assistants. Two principles arise as the most studied, namely \textit{helpfulness} and \textit{harmlessness}/\textit{safety}~\cite{bai2022constitutionalaiharmlessnessai}. Lately, other principles or goals such as \textit{coherence}~\cite{helpSteerNvidiaWang2024ccby40}, \textit{objectivity}~\cite{ji2024alignanythingtrainingallmodality}, \textit{honesty}~\cite{cui2023ultrafeedback}, or even \textit{humor}~\cite{kopf2023OPENASSISTANTDATASETAlignment} have been explored and aligned for. These have often been considered as alignment targets or subgoals that describe different dimensions of the task of aligning assistant AI models with human values~\cite{dognin2025contextualvaluealignment,Kirk2024TheModelsPRISM}. These alignment targets can be interpreted as domain-dependent ``concepts'' that realize broader human values~\cite{Osman2024}. For instance, the \textit{helpfulness} of a model may be considered a particularization of the more abstract value of \textit{achievement}~\cite{schwartz1992universals}, i.e. the value that fosters the pursuit of personal goals and aims.

\subsection{Alignment With Given Value Systems} Authors typically introduce alignment targets/attributes/criteria (hereafter also referred to simply as \emph{values})\footnote{To simplify the reading, we will refer to all alignment targets or related goals (such as \textit{helpfulness}) as full-fledged ``values'', assuming they have already been identified as relevant alignment targets in the application context considered; see value identification~\cite{Liscio2021}.} with the explicit goal of \textit{steering} the behaviour of LLMs towards different value-based preferences, or \textit{value systems}~\cite{Serramia2018}, specified by the user. In this process, most methods learn or rely on an existing computable representation of the values, i.e. a grounding~\cite{andres2026JAAMASfinal}, that measures the alignment or misalignment of responses with the values considered. This grounding notion is to some extent inherent in recent large language models (LLMs) and can be extracted using careful prompting strategies~\cite{Adams_Hu_Veenhuis_Joy_Ravichandran_Bray_Hoogs_Basharat_2025}. However, for novel application contexts or previously unexplored values, it is generally necessary to train grounding models (normally using \textit{reward models}) from human or AI annotations~\cite{bai2022traininghelpfulharmlessassistantMORLHF}. These value-alignment rewards can then be used to fine-tune a response generator (LLM-based), for instance through RLHF~\cite{christiano2023deeprlpreferences}. 

Notable examples follow the previous paradigm. In~\cite{dong-etal-2023-steerlm}, the authors learn a \emph{multi-attribute reward model} that estimates the degree of alignment of prompt--response pairs with multiple values or attributes. They then fine-tune LLMs through supervised fine-tuning (SFT) to generate responses that adhere to user-specified desired alignment levels over these attributes. Similarly,~\cite{wang-etal-2024-arithmetic} learn a multi-objective reward model via linear regression on quantified value-annotated examples, and subsequently train an LLM to combine these value-specific rewards according to given reward weightings. 
Other approaches aim to control the trade-offs between alignment with different values, such as Controllable DPO~\cite{guo-etal-2024-controllableDPOSFT} and MAP~\cite{Wang2024MAP:Palette}, by enforcing manually specified value-alignment constraints. Finally, some authors focus on obtaining a Pareto front of LLMs aligned for multiple value combinations~\cite{yangRewardsInContextMOAlignment,robustOnlineDPO}. These approaches, however, may expend computational resources on approaching regions of the Pareto front that are ultimately undesirable for the user.

\subsection{Estimating Value Systems} The previous works proposed value-steerable models that learn value representations (groundings), however, to adequately steer their models according to particular users, they expect them to convey their preferences in the form required by their respective architectures (e.g. weights). However, eliciting the value system of a user in terms of context-dependent values, is, in general, a hard problem that has received the name of \textit{value system estimation}~\cite{Siebert2022liscio,Liscio2023BlueskyTrackValueInference}. Classical solutions for this problem include surveys~\cite{schwartz1992universals} and participatory design methodologies~\cite{Ziegfeld_Kox_Akrum_Heijnen_2025valuespecification}. However, it has been shown that value systems elicited in surveys may change depending on the task the user is trying to perform~\cite{Kirk2024TheModelsPRISM}, which makes translating these value system estimations into actionable model behaviour a challenging task.  

Therefore, authors tend to learn value system representations directly from application-oriented data. For example, if a set of models aligned with known value system representations is available, selecting an appropriate model by asking an user to judge responses generated by each model is a natural approach to indirectly estimate her value system~\cite{rameRewardedSoups2023,robustOnlineDPO}, albeit a costly one. A more scalable solution is proposed by Wang et al.~\cite{wang-etal-2024-interpretableRewardModelingLLM}, who learn value-specific reward models and infer prompt-dependent value system representations in the form of scalarization weights via a dedicated deep neural network. While effective, this design lacks interpretability, as the relationship between prompt features and the inferred value system weights is not directly traceable. Relatedly, works such as LoRe~\cite{bose2025lore} and PAL~\cite{chen2025pal} learn latent preference embeddings to efficiently capture personal user preferences (i.e. value systems), but the dimensions of these embeddings do not directly correspond to identifiable values. Unlike these approaches, our method aims to learn a transparent value system representation based on represented values, which allows an agent to reason explicitly on the values and how they combine into a value system.

\section{Value Alignment in Generative AI}

 Following common practice, we model the generative task of any given generative model (e.g. LLM) as a contextual bandit problem~\cite{tennant2025moral} (or MDP, in general), where there is a set of states or instructions (from those of a possibly infinite set $S$, e.g. prompts) for which different actions (from a set $A$, e.g. responses) can be generated or selected. The model is described by a policy $\pi(a\mid s)$ that outputs the most appropriate action $a\in A$ given a state $s\in S$. We define the pair $\tau=(s,a)$ as a \textit{generation} and the set of generations as $\mathcal{T}\subset S\times A$. 

We assume that appropriateness of a generation $\tau=(s,a)$ can be modelled through a quantitative reward function (or \textit{reward model} in the generative AI jargon) $R(s,a)$. Such reward model can be learned from datasets of human annotated examples of preferences~\cite{christiano2023deeprlpreferences} or, directly, from scored outputs to example user inputs~\cite{cui2023ultrafeedback}. The goal of a model $\pi$ that is aware of this reward model is to generate actions that maximize the expected reward.  

In the following, we will be interested in measuring and learning the appropriateness of generations solely in terms of their alignment with human values and, based on that, their appropriateness in terms of the value systems of particular agents, using several reward models. In this paper, we leave out of scope the fine-tuning process of policies with these reward models, that may be carried out using existing methods such as RLHF~\cite{christiano2023deeprlpreferences,bai2022constitutionalaiharmlessnessai}.

\subsection{Representing Value Alignment}

To properly introduce our reward learning objectives, we will adopt the theoretical value alignment representation assumptions previously proposed in~\cite{andresEcai2025} and adapt these to the generative AI context.

\begin{definition}[Value Alignment]\label{def:value-alignment-preference}
     The alignment of a set of generations $\mathcal{T}$ with a value $v_i$ (in general, the alignment preferences with $v_i$) is represented by a weak order $\preccurlyeq_{v_i}$ over $\mathcal{T}$, where $\tau\preccurlyeq_{v_i} \tau'$ means that the generation  $\tau'$ is at least as aligned with value $v_i$ as $\tau$.
\end{definition}
Then, to jointly specify the alignment with a set of values, we formalize the notion of \textit{grounding}.
\begin{definition}[Grounding]\label{def:grounding}
    A \textbf{grounding} of the set of values $V$ is a set of weak orders $\preccurlyeq_{V} = \left\{\preccurlyeq_{v_i}\right\}_{i=1}^m$.  
\end{definition}

To facilitate the computational representation of these value alignment preferences, we propose approximating them using separate utility models for different values. This view has been paramount in works that model human/moral values in general~\cite{manel2022ethical,Serramia2018,montes2022synthesis,Karanik2024}. Thus, we propose a \textit{value alignment function}, $\A_{v_i}$, that quantifies the value alignment of a generation with a value $v_i\in V$, i.e. that represents the relation $\preccurlyeq_{v_i}$: i.e., for all $\tau, \tau' \in \mathcal{T} $:
    $\tau \preccurlyeq_{v_i} \tau' \iff \A_{v_i}(\tau) \leq \A_{v_i}(\tau')$. 

In the generative AI setting, an alignment function can be directly implemented through a reward model: $R_{v_i}(\tau)\equiv \A_{v_i}(\tau)$. Thus, gathering alignment functions for each preference relation, we can represent a grounding $\preccurlyeq_{V}$ with a multi-objective reward model, or \textit{reward vector} $\Rv(s,a)$ such that $\Rv(\tau)=\left(R_{v_1}(\tau),\dots,R_{v_m}(\tau) \right)\equiv\left(\A_{v_1}(\tau), \dots, \A_{v_m}(\tau)\right)$. Given a reward vector, we define the particular grounding that it implements as $\preccurlyeq_\Rv=\{\preccurlyeq_{R_{v_i}}\}_{i=1}^m$, i.e. $\tau\preccurlyeq_{R_{v_i}}\tau'\iff R_{v_i}(\tau)\leq R_{v_i}(\tau')$. In occasions, we will refer to this reward vector also as the \textit{grounding function/model} to emphasize its role as the implementation of a grounding. 

Importantly, our reward functions do not aim to capture the whole meaning of the given values, rather, we are estimating the \textit{alignment} of actions with these value based on the proposed multi-objective model.

\subsection{Representing Value Systems}\label{sec:representingvaluesystems}
A \textit{value system} expresses the general importance assigned to each value by a certain person or group (in general, an ``agent'') in a specific context. Given a grounding, it represents the individual (value-based) preferences over generations of an agent.

\begin{definition}[Value system]\label{def:value-system}
The \textbf{value system} of an agent $j$, based on a grounding $\preccurlyeq_{V}$, is determined by a weak order $\preccurlyeq^j_{V}$ over $\mathcal{T}$. If $\tau \preccurlyeq^j_{V} \tau'$, we say that $\tau$ is equally or more aligned than $\tau$ with $j$'s value system.
\end{definition}


The idea is that the value system preferences of an agent are based on the individual value preferences. Following other works in the field~\cite{Serramia2018,rodriguez2026reinforcementEthicalEmbedingWeightsRewardRL}, we assume that the value system of an agent can be approximated by a linear combination of the alignment of a generation with the values. Thus, our value system model is \textit{welfarist utilitarian}~\cite{Sinnott-Armstrong2019-SINC-5}, i.e. values are conceived as different sources of good, and each agent's value system weighs their relative importance.

\begin{definition}[Value System Function]\label{def:value-system-alignment-function}
Let $j$ be an agent with a value system $\preccurlyeq^j_{V}$ and a grounding represented by a reward vector $\Rv$. The function 
$\A_{W_j,{\Rv}}(\tau) = W_j\cdot(R_{v_1}(\tau), \dots, R_{v_m}(\tau))^T$ is a \textbf{value system function} for $j$ if it represents $\preccurlyeq^j_{V}$ over $\mathcal{T}$, i.e., for all $\tau,\tau' \in \mathcal{T}$:
$$\A_{W_j,{\Rv}}(\tau) \leq \A_{W_j,{\Rv}}(\tau') \iff \tau \preccurlyeq^j_{V} \tau'$$ where $W_j=(w_j^{v_1},\dots,w_j^{v_m})$ are called \textit{value system weights}. These are bounded in the unit $(m-1)$-simplex\footnote{As we treat values as sources of good, we do not consider value systems that willingly demote values by using negative weights. If demoting a value is expected to occur, the designer should instead include another value in the model as the negated version of the original. An example of such case is the technical value of response \textit{complexity}~\cite{helpSteerNvidiaWang2024ccby40}, which in some situations might be undesirable (when a response is for a audience of children, for instance). Instead of using a negative value system weight, we suggest adding the contrary value of \textit{simplicity}, to represent the abstract goal of reducing complexity.}: $W_j\in \Delta^{m-1}$, i.e. $W_j\in {(0,1)}^m$, $\sum_{i=1}^m w_j^{v_i}=1$.
\end{definition} 

We can interpret the previous definition as having a \textit{value system reward} for each agent that is a linear scalarization~\cite{linearscalarizedMOMDP2013} of the multi-objective reward vector $\Rv$ with weights $W_j$:
$$R_j(\tau)=W_j\cdot \Rv(\tau)^T$$ 

Finally, we denote the value system implemented through the value system alignment function $\A_{W_j,{\Rv}}$ (with weights $W_j$ and reward vector $\Rv$) by $\preccurlyeq^{W_j}_{\Rv}$, i.e. $\tau \preccurlyeq^{W_j}_{\Rv} \tau'\iff W_j\cdot \Rv(\tau) \leq W_j\cdot \Rv(\tau') $.

We emphasize that our value system representation is a mere approximation of the true value systems of agents, which might be much more complex and context-dependent. We chose a linear representation that eases the interpretability of our model.

\section{Learning Value Systems}\label{sec:vslearning}

In this section, we present our proposal: a model and algorithm for learning simultaneously a reward vector $\Rv$ and a weight vector $W$, which approximate a grounding of the values considered and the value system of a certain agent, respectively. Our learning approach relies on a dataset of annotated preferences for both, values and the value system of an agent.

\subsection{Data Assumptions and Scope }

Within the field of generative AI, even though datasets that identify individual annotators or agents exist~\cite{kopf2023OPENASSISTANTDATASETAlignment,Kirk2024TheModelsPRISM}, the majority of datasets that label examples along multiple alignment dimensions or values agglomerate the opinions of distinct individuals with potentially divergent preferences~\cite{helpSteerNvidiaWang2024ccby40,ji2024safetydataset,ji2024alignanythingtrainingallmodality}. Our work seeks to be applicable in the latter cases as well, assuming that the dataset preferences are those of an hypothetical abstract agent representing the collective. 

This assumption may raise concerns about the usefulness of the proposed approach, as it might attempt to reconcile potentially divergent opinions into a single representation. However, with regard to value recognition in particular, we argue this view might not be problematic. Namely, although we consider agents may hold divergent interpretations of the values under consideration (for example, the meaning of “freedom”), within delimited domains we can reasonably assume that most agents understand values in broadly similar manner. That is, there exists a degree of consensus regarding the meaning of values, which we refer to as a social grounding~\cite{andresEcai2025}. 

In the generative AI domain, the existence of such social groundings is often implicitly assumed. We identify two main reasons for this. First, most alignment-related datasets justify the inclusion of multiple alignment criteria as universal quality dimensions and provide explicit definitions for each, which condition the preference annotation process~\cite{bai2022constitutionalaiharmlessnessai}. Second, some works explicitly average different opinions over the same utterances to improve ``robustness'' of the annotations~\cite{helpSteerNvidiaWang2024ccby40}, to then train a shared ``grounding-function-like'' model.

With regard to value systems, it is clear that even in the same application domain, agents may have diverse value-based preferences (value systems). 
Our work, however, performs an implicit aggregation~\cite{leraleri2024aggregation} of potentially divergent value systems. Doing this has been shown to have limitations in terms of the final representativeness of value-based preferences, specially in datasets with heterogeneous agents~\cite{andresEcai2025,pmlr-v235-chakraborty24b}. We acknowledge potential limitations of the representativeness that we achieve in these cases, but argue that our approach is still valuable, as the learned value system can be interpreted as a transparent representation of the overall, empirical, importance given to each value according to the dataset annotators.

In conclusion, in this paper, we consider that the value alignment preferences observed in our dataset correspond to a social grounding in a delimited context (e.g. text generation) and the value system preferences pertain to  some abstract agent. When data annotations originate from a group of agents with potentially heterogeneous value systems, then our learning approach would try to identify the aggregated value system that best represents the collective preferences of that group. 

Importantly, due to the limitations mentioned above, if our model is learned from heterogeneous datasets, the resulting value system reward (given by $W\cdot \Rv$) should not be intended for use as a stand-alone objective in RLHF fine-tuning. In such cases, it would likely bias the generative model toward majority opinions. Instead, we recommend using the learned grounding function $\Rv$ for multi-objective fine-tuning. Specifically, generative models should be trained in a way that allows their immediate alignment with respect diverse value systems, following approaches similar to prior work~\cite{robustOnlineDPO,helpSteerNvidiaWang2024ccby40,Jang2023PersonalizedMerging}. Our approach offers an additional advantage: it enables the exploration of value systems in the vicinity of the learned one, that acts as the representation of the preferences of the majority.

\subsection{Problem Definition}
Considering the above, our \textit{value system learning} problem consists of   \textit{learning} a grounding implementation together with a representation of the value system of a certain (possibly abstract) agent $j$ that best approximates the social grounding $\preccurlyeq_V$ and value system $\preccurlyeq_V^j$ enacted in a observed preference dataset. In particular, we assume access to examples of value alignment and value system preferences over pairs of generations. Formally, we assume a dataset, $\DS$, composed by entries of the type $(\tau, \tau',y^j_V,y_{v_1},\dots,y_{v_m})$, where each \textit{label} $y_{\_}\in \{0,0.5,1\}$ indicates the preference of $\tau$ over $\tau'$ (1), the contrary case (0) or indifference (0.5). Specifically, label $y^j_V$ indicates the preference according to the agent's value system, and labels $y_{v_i}$, $i=1,\dots,m$ represent the preference regarding the alignment of each generation pair with respect to each value $v_i$. As mentioned above, we assume that the latter labels correspond to the social grounding of the values.

Given the analysis in Section~\ref{sec:representingvaluesystems},  we propose reducing our learning problem into two tasks i) learn a quantitative reward vector $\Rv$ such that the grounding $\preccurlyeq_\Rv$ implemented by $\Rv$ approximates the real social grounding $\preccurlyeq_V$ and ii) learn a vector of value system weights $W$ such that the value system $\preccurlyeq_\Rv^{W}$ approximates the real value system $\preccurlyeq_V^j$. 

We now provide ways to quantitatively \textit{measure} the appropriateness of potential candidate solutions. For that, we quantify the alignment preference differences between our models and the given data. To measure preference differences in general, we propose the \textit{discordance} between preference relations ($\preccurlyeq^1$, $\preccurlyeq^2$) as the proportion of ordered pairs of generations in a set $S\subseteq \T\times \T$ that are ranked differently:

\begin{equation}\label{eq:discordance}
    d_{S}\left(\preccurlyeq^1, \preccurlyeq^2\right) = \frac{1}{\abs{S}}\sum_{\substack{(\tau,\tau')\in S}}\mathds{1}\left(\left({\tau\preccurlyeq^1 \tau'}\right) \not\equiv \left(\tau \preccurlyeq^2 \tau'\right)\right)
\end{equation}  

In our settings, the aim is to learn grounding functions and value system weights with minimal discordance to the original data (represented in the dataset $\DS$). Thus,  
we can define our value system learning problem as a nested optimization problem:

\begin{align}\label{eq:bilevel}
(W^*,\Rv^*) &\in \argmin_{W,\Rv'=(R_{v_1},\dots,R_{v_m})}\ {d_{\DS}\left({
\preccurlyeq^W_{\Rv'},
\preccurlyeq^j_V
}\right)}
\nonumber
\\
\text{subject to} &\quad R_{v_i}' \in \argmin_{R_{v_i}} \ 
{d_{\DS}\left({
\preccurlyeq_{R_{v_i}}, \preccurlyeq_{v_i}
}\right),}\nonumber\\
&\quad \forall i\in\{1,\dots,m\}
\end{align}
The formulation prioritizes the minimization of the average discordance on the learned grounding functions, before improving the value system estimation. This ensures that the learned value system weights are based on a coherent social grounding approximation.

It should be noted that in Problem~\ref{eq:bilevel_loss}, the discordance can be directly calculated by comparing the preference predictions of the learned models with the original preference relations on the generation pairs included in the dataset $\DS$. To do so, we discretize the score differences predicted by our model for each pair of generations to predict strict preference ($\tau\prec \tau'/\tau\succ \tau'$) or indifference ($\tau \simeq \tau'$), which is treated as predicting both $\tau'\preccurlyeq\tau$ and $\tau'\succcurlyeq\tau$. Then, we check the agreement of the discretized predictions with the corresponding dataset label $y\in\{0,1,0.5\}$. Details on how the predictions of the model are discretized and the data-based discordance is calculated are given in supplementary material, Section A.3.

\subsection{Algorithm}

Our algorithm is an approximate solution to Problem~\ref{eq:bilevel} that versions the classical deep reward learning method based on the Bradley-Terry (BT) model~\cite{bradleyTerryModel1952}. The solution is represented by several neural networks. First, a reward vector network $\Rv^\theta$ with parameters $\theta$ implements an approximation the social grounding in the data. Second, a linear layer given by certain value system weights $W^\omega$ that are parametrized with $\omega \in \R^{m}$ through a \textit{softmax} calculation: $W^\omega=(w^{v_1}, \dots, w^{v_m}) = \frac{\exp \omega}{\sum \exp(\omega)} $, so $W^\omega$ remains in the simplex.

For learning purposes, we approximate the value alignment and value system preferences in a differentiable manner with the BT model. Namely, we use the sigmoid ($\sigma$) of the reward difference between two generations that share the same initial prompt: $p(\tau \succ \tau' \mid R) = \sigma(R(\tau)-R(\tau'))$, as an approximation of the degree of preference (given as a probability between $0$ and $1$) of preferring the generation $\tau$ over $\tau'$. To learn with this model, we modify the parameters of $R$ so that the values of $p(\tau \succ \tau' \mid R)$ are closer to the preference labels $y\in \{0,0.5,1\}$ by minimizing a binary cross-entropy loss~\cite{christiano2023deeprlpreferences}. Minimizing this loss, in Eq.~\ref{eq:crossentropy}, discordance between the model $R$ and the dataset will also be decreased.

We add a modification to the loss using the centering mechanism proposed in~\cite{eisenstein2024helpingherdingrewardmodel}, which reduces the magnitude of the estimated rewards and helps ensure that reward scales are comparable across values. This mechanism introduces a hyperparameter $r$, which we set to the recommended value of $r = 0.01$.

\begin{align}\label{eq:crossentropy}
    \mathcal{L}(\tau,\tau',y \mid R) = &-y\log p(\tau \succ \tau' \mid R) \nonumber\\&- (1-y)\log\left(p(\tau \prec \tau' \mid R)\right)\nonumber\\&+r\left( R(\tau)+R(\tau')\right)^2
\end{align}
 
 Then, we propose a loss associated to the discordance of our model with the alignment preference examples for each value $v_i\in V$  in our dataset (Eq.~\ref{eq:losschr}).
 \begin{equation}\label{eq:losschr}
    \mathcal{L}_{v_i}^{\theta}(\DS)=
    \sum_{(\tau,\tau',y_{v_i}) \in \DS} \frac{\mathcal{L}( \tau, \tau', y_{v_i}\mid \Rv^\theta)}{{\abs{\DS}}}
\end{equation}

We consider all the previous losses together to define a \textit{grounding (discordance) loss} given by: $\mathcal{L}_{V}^{\theta}(\DS)=\left(\mathcal{L}_{v_1}^{\theta}(\DS),\dots,\mathcal{L}_{v_m}^{\theta}(\DS)\right)$.

And we define a loss associated to value system discordance (Eq.~\ref{eq:lossrepr}).

\begin{equation}\label{eq:lossrepr}
\resizebox{0.9\linewidth}{!}{
    $\displaystyle
    \mathcal{L}_{\VS}^{\omega,\theta}(\DS)= \sum_{(\tau,\tau',y^j_V) \in \DS} \frac{\mathcal{L}( \tau, \tau', y^j_V \mid W^\omega\cdot\Rv^\theta )}{\abs{\DS}}
    $}
\end{equation}

A naïve approach to use these losses to solve Problem~\ref{eq:bilevel} would be first minimizing the average of the grounding discordance losses $\mathcal{L}_{V}^{\theta}$, and then, with $\theta$ frozen, learn the weights that minimize the value system discordance loss $\mathcal{L}_{\VS}^{\omega,\theta}$. However, it has been shown that multiple grounding functions are compatible with a low discordance while only a subset of those can also be effectively combined linearly to approximate value system~\cite{andresEcai2025} . 

Therefore, we propose a \textit{simultaneous} learning approach that balances both goals. The idea is treating Problem~\ref{eq:bilevel} as a constrained optimization problem, where the constraint is minimizing the inner grounding discordance objective. In this formulation (Problem~\ref{eq:bilevel_loss}), we introduce the constraints that the grounding discordance losses in Eq.~\ref{eq:losschr} should reach an (unknown, approximated) per-value \textit{loss target} $\mathcal{L}^*_{v_i}$ that depends on the capabilities of the available model. These targets $\{\mathcal{L}^*_{v_i}\}_{i=1}^m$ are estimated during training by an ``exponentially weighted \textit{minimum}'' of observed losses during training (see step 3 of the \textbf{Algorithm description}).

\begin{align}\label{eq:bilevel_loss}
(\omega^*,\theta^*) \in \argmin_{W,\Rv'} 
&\ \mathcal{L}^{\omega,\theta'}_{\VS}\left(\DS\right) \nonumber\\
\text{subject to} 
\quad \theta' 
&\in \{\theta \mid \mathcal{L}^{\theta}_{v_i}\left(\DS\right)
 \leq \mathcal{L}^*_{v_i}\} \nonumber\\
&\hspace{-0.9em}\forall i\in\{1,\dots,m\}
\end{align}

To approach Problem~\ref{eq:bilevel_loss}, we obtain its Lagrangian  (Eq.~\ref{eq:lagrangian}). It introduces Lagrange multipliers $\pmb{\lambda}=(\lambda_1, \dots,\lambda_m)\in \R^+$.  Then, the optimization consists of finding a saddle point for the \textit{min-max} goal in Eq.~\ref{eq:MinMaxLag}. 

\begin{align}\label{eq:lagrangian}
    \mathcal{L}^{\omega,\theta}_{\pmb{\lambda}}(\DS) &= \mathcal{L}_{\VS}^{\omega,\theta}(\DS)\nonumber\\ &+ \sum_{i=1}^m\lambda_i  \left(\mathcal{L}^{\theta}_{v_i}(\DS)-\mathcal{L}^*_{v_i}\right)
\end{align}
\begin{align}\label{eq:MinMaxLag}
    \min_{\theta,\omega}\left[ \max_{\pmb{\lambda}>0}\left[\mathcal{L}^{\omega,\theta}_{\pmb{\lambda}}(\DS)\right]\right]
\end{align}

Assuming that the reward vector $\Rv^\theta$ is an affine transformation (e.g. obtained by training only an additional single-layer perceptron over embeddings extracted by a ``base'' LLM), Problem~\ref{eq:bilevel_loss} is convex, and with additional minor assumptions, strong duality holds~\cite{boyd2004convex} , which justifies the theoretical use of dual ascent~\cite{boyd2011ADMMreferenceOfDualAscent} to find an optimal solution to Problem~\ref{eq:MinMaxLag} (for fixed loss targets). Instead of dual ascent, which requires the exact minimization of the Lagrangian at every training iteration with respect to $\omega$ and $\theta$, we approximate a solution with a modified gradient descent-ascent (GDA) algorithm~\cite{zamani2022convergencerateanalysisgradient}, which converges to a local Nash equilibrium and even the global optimum with certain learning rate choices~\cite{fallahSGDAGlobalconvergencewithSMOOTHSTROGCONVEX}. See supplementary material, Section A.2 for proofs of the previous theoretical aspects.

Before introducing the algorithm, we add some technical modifications. First, to guarantee that the learning step sizes on the Lagrangian remain constant, we add an auxiliar ``value system multiplier'', $\lambda_\VS$, which is calculated so that the sum of the multipliers including this one have an average value of $1$\footnote{A similar multiplier normalization process was proposed for Safe RLHF methods~\cite{Dai2024}.}. Second, to avoid any multiplier (including $\lambda_\VS$) from vanishing, we force them to be over a certain user-defined threshold $\lambda_{0}>0$. To satisfy the previous constraints, we use auxiliary parameters $\bar{\pmb\lambda}=(\bar{\lambda}_1,\dots,\bar{\lambda}_m,\bar{\lambda}_\VS)\in\R^{m+1}$ to parametrize the ``grounding multipliers'' $\pmb\lambda=(\lambda_1,\dots,\lambda_m)$ and the value system one $\lambda_\VS$, using: $(\lambda_1,\dots,\lambda_m,\lambda_\VS) =\frac{(m+1)\exp\bar{\pmb\lambda}}{\sum_{\bar{\lambda}\in\bar{\pmb{\lambda}}}\exp{\bar{\lambda}}}(1-\lambda_0) +\lambda_0$. Finally, for theoretical reasons, we add $\ell_2$ regularization to the Lagrangian, affecting all parameters $\omega,\theta,\bar{\pmb\lambda}$. The Lagrangian used in our algorithm with the previous modifications is in Eq.~\ref{eq:lagrangianV2}.

\begin{align}\label{eq:lagrangianV2}
    \bar{\loss}^{\omega,\theta}_{\bar{\pmb{\lambda}}}(\DS) &= \lambda_\VS\mathcal{L}_{\VS}^{\omega,\theta} (\DS) + \gamma_\omega\|\omega\|^2+ \gamma_\lambda\|\bar{\pmb\lambda}\|^2\nonumber\\ &+ \sum_{i=1}^m\lambda_i  \left(\mathcal{L}^{\theta}_{v_i}(\DS)-\mathcal{L}^*_{v_i}\right) + \gamma_\theta\|\theta\|^2
\end{align}


\paragraph{Algorithm description.} After random parameter initialization, our algorithm proceeds by repeatedly executing the following three steps over all batches in the dataset, for $N$ epochs\footnote{An epoch here is a pass over the complete dataset.}:
\begin{enumerate}
    \item \textbf{Gradient descent.} Calculate gradients of Eq.~\ref{eq:lagrangianV2} 
    with respect to the model parameters ($\omega$ and $\theta$) for several batches (accumulating them). Then, perform gradient \textit{descent} step to update $\omega$ and $\theta$ .
    
    \item \textbf{Gradient ascent.} We perform a gradient \textit{ascent} step on the Lagrange multiplier parameters $\bar{\pmb{\lambda}}$ based on the Lagrangian. In practice, for this update, we compute the gradient of a modified version of the Lagrangian in Eq.~\ref{eq:lagrangian} with grounding discordance losses measured on previous batches (\text{BS}): $$\nabla_{\bar{\pmb\lambda}}\left[\sum_{i=1}^m\lambda_i \max(\loss^\theta_{v_i}(\text{BS})-\loss^*_{v_i},0) +\gamma_\lambda\|\bar{\lambda}\|^2\right]$$ 
    With the ``$\max$'' projection, this gradient is approximately zero (up to the effect of $\ell_2$ regularization) for those multipliers whose associated grounding discordance losses, computed over recent batches, have already met their target values. This design ensures that the primary objective remains reducing grounding discordance losses as much as possible, even beyond the target estimations. The multipliers can still may still decrease slightly due to the $\ell_2$ regularization term, which helps correct potential overestimations. It is noticeable that we do not update the value system multiplier $\lambda_\VS$ based on the value system discordance loss. However, when the gradients of the grounding multipliers ($\pmb\lambda=(\lambda_1, \dots,\lambda_m$) vanish, the $\ell_2$ regularization progressively reduces the gap between $\lambda_\VS$ and $\pmb{\lambda}$. Thus, only once the grounding discordance losses have reached their targets, the relative priority of minimizing the value system discordance loss is subject to a slight increase.

    \item \textbf{Loss target update.} After repeating steps 1-2 for a number of steps and batches (let this number be $u>0$), we re-estimate the grounding discordance loss targets that are attainable with our reward model $\Rv^\theta$. To do so, we employ a ``exponentially weighted \textit{minimum}'' with coefficient $\eta\in (0,1)$ (close to $1$) of the averages of the losses obtained during each period of $u$ steps. Namely, given the calculation of the loss associated with each value $v_i$ across $u$ different batches $\{\text{BS}_k\}_{k=1}^u$, and its current loss target estimation $\mathcal{L}^*_{v_i}$, the update is:
    $\mathcal{L}^*_{v_i} \gets \eta\mathcal{L}^*_{v_i}+(1-\eta)\min\{\mathcal{L}^*_{v_i},\frac{1}{u}\sum_{k=1}^u\mathcal{L}_{v_i}^\theta (\text{BS}_k)\}$. We set the initial value of the target losses as the average of the losses obtained in the first $u$ batches.
\end{enumerate}

We supply a fully-detailed algorithm pseudocode in the supplementary material.

\section{Evaluation}\label{sec:evaluation}

We evaluate our proposal by comparing its performance against two baselines and the Armo-RM method~\cite{wang-etal-2024-interpretableRewardModelingLLM}, a competitive reward model with the most similar architecture to our approach we found in the literature. We analyse 2 datasets: \Ultra~\cite{cui2023ultrafeedback} and the text-to-text task in \PKU~\cite{ji2024alignanythingtrainingallmodality}. We chose these datasets since they have been widely used for LLM alignment and because their structure is close to our dataset assumptions: indeed, in \PKU the annotators are not identified and in \Ultra, the annotator is a single GPT-4 agent.

\subsubsection{Methods.} Here we detail each method characteristics and training setting:

\begin{itemize}

\item \textbf{Armo-RM.} A pretrained multi-objective reward model that predicts alignment with 19 \textit{attributes} (that we treat as values) from six datasets. It uses a frozen \textit{Llama3-8B-v0.1} backbone and a linear head over the last hidden state acting similar to our reward vector $\Rv$ (grounding function). A scalar reward is obtained by linearly combining these outputs with weights $W_{\text{arm}}(p)\in[-1,1]^{19}$, produced by a prompt-conditioned three-layer MLP trained with MSE.

\item \textbf{VSL-RM (ours).} We use the same reward vector architecture as Armo-RM (a linear layer over the frozen \textit{Llama3-8B-v0.1} hidden state). We train it with our algorithm from scratch on both datasets.

\item \textbf{BT-RM.} A single-objective model that minimizes only the value system discordance loss (Eq.~\ref{eq:lossrepr}), without observing grounding discordance losses. It shares the same architecture and parameters as VSL-RM, and we also train one version per dataset. It serves as a lower bound on attainable value system discordance with the used architecture.

\item \textbf{SEQ-RM. } Shares the same architecture as VSL-RM but is trained sequentially: first to solely minimize grounding discordance (updating $\theta$), then minimizing the value system discordance (updating $\omega$ with $\theta$ frozen) for an additional 50\% of epochs and higher learning rate. It serves both as a control baseline for comparing sequential vs. simultaneous training and as oracle for minimum attainable grounding discordance with the used architecture.
\end{itemize}

\subsubsection{Training details.} The datasets are split into training, validation, and test sets. The validation data is used for hyperparameter selection, while test data is used only for final evaluation. All models are trained with four random seeds (shared across methods), except Armo-RM, which is used as provided on HuggingFace\footnote{Armo RM model: \url{https://huggingface.co/RLHFlow/ArmoRM-Llama3-8B-v0.1}, version from May 11th, 2026.}.

We trained the models (except Armo-RM) using approximately the same set of common hyperparameters selected with a standard search method. The baseline models rely primarily on standard hyperparameters (e.g., batch size), which have limited impact on performance in this context, so we keep them aligned with those used for VSL-RM without further tuning. The only exception is SEQ-RM, for which we increase the value system learning rate to compensate for the smaller number of value system training iterations. 

All experiments were repeated with four random seeds. We observed little variation across runs, and therefore decided not to increase the number of training seeds. As optimizer, we used a paged, 32 bit ``AdamW'' one. Further details on hyperparameters, dataset preprocessing, and hardware are provided in supplementary material, Section A.4.

\paragraph{Metrics.} 
For an easier interpretation of results, we report grounding and value system \textit{accuracies}, which we define as the complement of the grounding and value system discordances, respectively:
\[
\text{acc}_{\DS}(R_{v_i}, \preccurlyeq_{v_i}) = 1 - d_{\DS}(\preccurlyeq_{R_{v_i}}, \preccurlyeq_{v_i}),
\]
\[
\text{acc}_{\DS}((W,\Rv), \preccurlyeq_V^j) = 1 - d_{\DS}(\preccurlyeq^W_{\Rv}, \preccurlyeq_V^j).
\]
Accuracy tends to $1$ when each model perfectly reproduces the preferences of the data (corresponding to a discordance of $0$), and tends to $0$ when discordance raises to its maximum ($1$).

\subsection{Results}
Figure~\ref{fig:curves} presents training curves for the \emph{average grounding accuracy} (denoted by $\textbf{AGA} = \sum_{i=1}^m \frac{1}{m}\text{acc}_{\DS}(R{v_i}, \preccurlyeq_{v_i})$) and the \emph{value system accuracy} (denoted by $\textbf{VSA} = \text{acc}_{\DS}((W,\Rv), \preccurlyeq_V^j)$). The curves depict the evolution of these accuracies throughout training, measured on the validation splits of each dataset. We do not present the curves for Armo-RM, as it was originally pretrained.

As expected, BT-RM achieves highest VSA, as it is optimized solely for this objective, while VSL-RM clearly outperforms it in AGA due to the inclusion of the grounding discordance loss. Interestingly, AGA increases rapidly in both methods. Likely, this occurs because of our architecture where the value system reward is a positive combination of a  reward vector (corresponding to the different values). Thus, a better prediction of the preferences of the value system correlates with a better value alignment estimation. Also in the case of BT-RM a reward vector is learned. However, this vector would not directly correspond to the underlying values. Still, the learned vector is indirectly correlated with the original values. This explains the fact that AGA increases rapidly for BT-RM (due to the positive correlations) and then drops (since the reward vector does not actually represent the original values). In VSL-RM, AGA still improves over the learning epochs.

\begin{figure}[t]
    \centering
    \includegraphics[width=0.95\columnwidth]{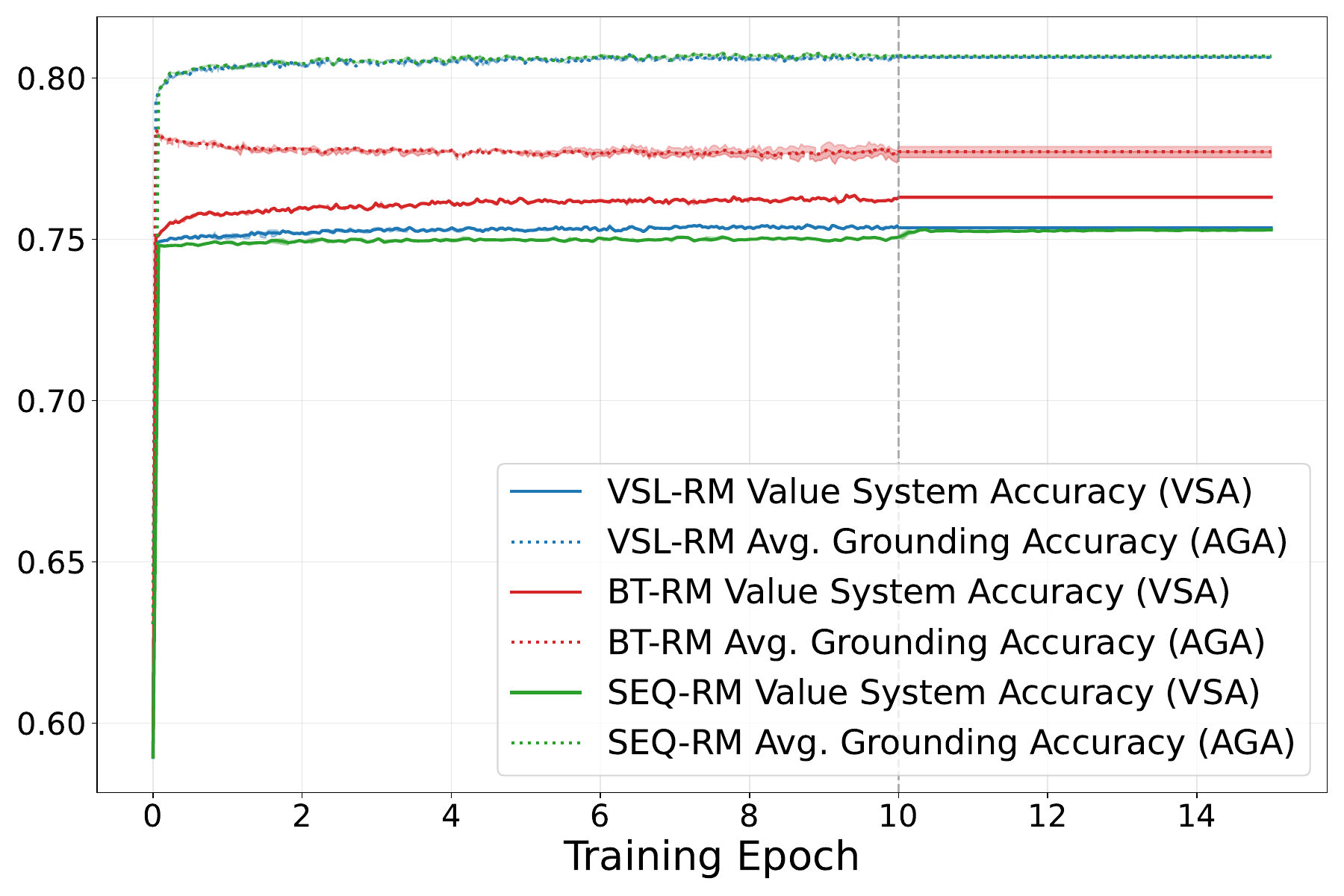}
    \includegraphics[width=0.95\columnwidth]{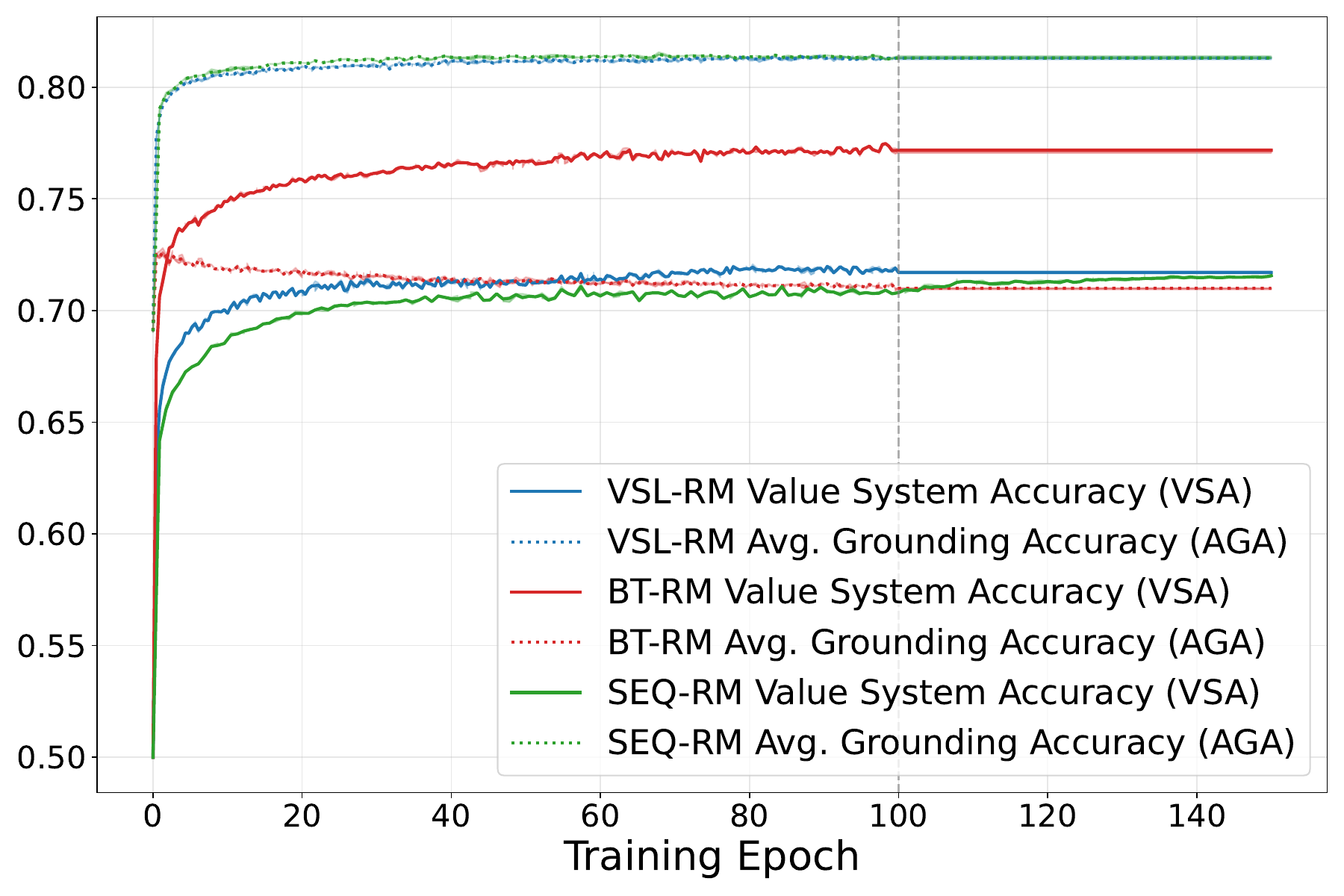}
    \caption{Average grounding and value system accuracy per training method (\textcolor{blue}{VSL-RM}, \textcolor{red}{BT-RM}, \textcolor{teal}{SEQ-RM}) in the validation splits of \Ultra (Top) and \PKU (Bottom). It might be difficult to appreciate, but AGA by VSL-RM and SEQ-RM almost coincide. The vertical line indicates the training moment where VSL-RM an BT-RM end their training and SEQ-RM starts its second phase to maximize VSA.}
    \label{fig:curves}
\end{figure}

\begin{table*}[t]
\centering

\begin{tabular}{lrrrr|rr}
\toprule
Method & \multicolumn{1}{r}{Helpfulness} & \multicolumn{1}{r}{Honesty} & \multicolumn{1}{r}{Truthfulness} & \multicolumn{1}{r}{Instruction Following} & \multicolumn{1}{|r}{AGA} & \multicolumn{1}{r}{VSA} \\
\midrule
Armo-RM & \makecell[r]{\textbf{0.826} } & \makecell[r]{\textbf{0.814} } & \makecell[r]{\textbf{0.840} } & \makecell[r]{\textbf{0.824} } & \makecell[r]{\textbf{0.826} } & \makecell[r]{0.754 } \\
\midrule
BT-RM & --- & --- & --- &--- & ---& \makecell[r]{\textbf{0.768} \\ $\pm$0.0002} \\
VSL-RM & \makecell[r]{\underline{0.812} \\ $\pm$0.0005} & \makecell[r]{\underline{0.793} \\ $\pm$0.0002} & \makecell[r]{0.786 \\ $\pm$0.0007} & \makecell[r]{\underline{0.815} \\ $\pm$0.0003} & \makecell[r]{0.802\\ $\pm$0.0002} & \makecell[r]{\underline{0.760} \\ $\pm$0.0003} \\
SEQ-RM & \makecell[r]{\underline{0.813} \\ $\pm$0.0003} & \makecell[r]{\underline{0.793} \\ $\pm$0.0003} & \makecell[r]{{\underline{0.788} }\\ $\pm$0.0008} & \makecell[r]{\underline{0.815} \\ $\pm$0.0003} & \makecell[r]{\underline{0.802} \\ $\pm$0.0002} & \makecell[r]{0.758 \\ $\pm$0.0002} \\
\bottomrule
\end{tabular}%

\caption{Average grounding accuracy and value system accuracy (\Ultra).}
\label{tab:group-summary}
\end{table*}
For SEQ-RM, both AGA and VSA improve during the first phase in both datasets, with VSA computed using uniform value system weights ($1/m$)\footnote{VSA increases here due to the architectural effect described before: a good estimation of value alignemnt preferences, combined with positive weights, leads to a better estimation of value-based preferences.}. Despite its simultaneous learning approach, VSL-RM matches the AGA achieved by SEQ-RM, which is optimized solely for grounding discordance. This indicates that VSL-RM effectively prioritizes learning grounding functions with the lowest discordance, as intended in Problem~\ref{eq:bilevel}. In the second phase of SEQ-RM, VSA increases further, almost matching the one achieved with VSL-RM. We will show the gap is still statistically significative, by evaluating the test datasets.

\begin{table*}[t]
\centering
\begin{tabular}{lrrrrr|rr}
\toprule
Method & \multicolumn{1}{r}{Prompt Following} & \multicolumn{1}{r}{Objectivity} & \multicolumn{1}{r}{Clarity} & \multicolumn{1}{r}{Information Richness} & \multicolumn{1}{r}{Safety} & \multicolumn{1}{|r}{AGA} & \multicolumn{1}{r}{VSA}  \\
\midrule
Armo-RM & \makecell[r]{0.715} & \makecell[r]{\underline{0.798}} & \makecell[r]{0.609} & \makecell[r]{0.647} & \makecell[r]{0.837} & \makecell[r]{0.721} & \makecell[r]{0.685} \\
\midrule
BT-RM & --- & --- & --- &--- & ---& ---&\makecell[r]{\textbf{0.773}\\ $\pm$0.0008} \\
VSL-RM & \makecell[r]{\textbf{0.765} \\ $\pm$0.0029} & \makecell[r]{\underline{0.798} \\ $\pm$0.0030} & \makecell[r]{\textbf{0.720} \\ $\pm$0.0022} & \makecell[r]{\textbf{0.819} \\ $\pm$0.0029} & \makecell[r]{\underline{0.890} \\ $\pm$0.0024} & \makecell[r]{\underline{0.799} \\ $\pm$0.0005} & \makecell[r]{\underline{0.714} \\ $\pm$0.0009} \\
SEQ-RM & \makecell[r]{\textbf{0.762} \\ $\pm$0.0014} & \makecell[r]{\textbf{0.806} \\ $\pm$0.0031} & \makecell[r]{\textbf{0.721} \\ $\pm$0.0024} & \makecell[r]{\textbf{0.818} \\ $\pm$0.0042} & \makecell[r]{\textbf{0.896} \\ $\pm$0.0014} & \makecell[r]{\textbf{0.801} \\ $\pm$0.0015} & \makecell[r]{0.707 \\ $\pm$0.0015} \\
\bottomrule
\end{tabular}%

\caption{Average grounding and value system accuracies (\PKU). The accuracies for each PKU value are estimated by the scores given by its most related Armo-RM attribute (detailed in supplementary material).}
\label{tab:group-summary2}
\end{table*}

\begin{table*}[t]
\centering
\begin{tabular}{lrrrr|rrrrr}
\toprule
Method & {HF} & {HO} & {TF} & {IF}  & {PF} & {OB} & {CL} &{IR} & {SF} \\
\midrule
VSL-RM& \makecell[r]{0.096 \\ $\pm$0.002} & \makecell[r]{0.347 \\ $\pm$0.008} & \makecell[r]{0.408 \\ $\pm$0.007} & \makecell[r]{0.147 \\ $\pm$0.003} & \makecell[r]{0.223\\$\pm$0.003} & \makecell[r]{0.125 \\ $\pm$0.005} & \makecell[r]{0.368\\$\pm$0.005} & \makecell[r]{0.203\\$\pm$0.003} & \makecell[r]{0.082\\$\pm$0.002} \\
SEQ-RM& \makecell[r]{0.050 \\ $\pm$0.009} & \makecell[r]{0.477 \\ $\pm$0.010} & \makecell[r]{0.316 \\ $\pm$0.006} & \makecell[r]{0.158 \\ $\pm$0.006} & \makecell[r]{0.265 \\ $\pm$0.003} & \makecell[r]{0.110 \\ $\pm$0.006} & \makecell[r]{0.410\\$\pm$0.004} & \makecell[r]{0.168\\$\pm$0.001} & \makecell[r]{0.047\\$\pm$0.001} \\
\bottomrule
\end{tabular}
\caption{Average and standard deviation of value system weights of VSL-RM and SEQ-RM in \Ultra (left) and \PKU (right). HF = Helpfulness, HO = Honesty, TF = Truthfulness, IF = Instruction Following, PF = Prompt Following, OB = Objectivity, CL = Clarity, IR = Information Richness, SF = Safety.}
\label{tab:group-weights}
\end{table*}

Table~\ref{tab:group-summary} (for \Ultra) and Table~\ref{tab:group-summary2} (for \PKU) present the performance of the models on the \textbf{test datasets}. We omit the presentation of grounding accuracies in the case of BT-RM, as the model predictions in value alignment are latent constructs used to minimize value system discordance.

All methods achieve comparable VSA across datasets, with BT-RM consistently performing best and Armo-RM exhibiting the weakest results. In the \PKU dataset, this could be expected, because Armo-RM was not trained on it. However, Armo-RM was trained on \Ultra, and we expected that Armo-RM’s gating mechanism should, in principle, outperform our linear value system representation in predicting overall preferences. This inaccuracy might be due to the fact that its training shared focus with other 5 datasets with possibly divergent overall preference criteria. As observed in Figure~\ref{fig:curves}, BT-RM slightly outperforms VSL-RM regarding VSA in \Ultra\ and more substantially in \PKU, suggesting that, in the latter, value systems may not easily be approximated as linear functions of the dataset value alignment scores. This indicates that more expressive value system models will be necessary to improve performance.

Comparing VSL-RM with SEQ-RM, the former consistently achieves marginally higher VSA, yet with statistically significant differences (one-sided Welch t-tests: $p=0.0004$ in \PKU\ and $p<0.0001$ in \Ultra). This highlights a limitation of sequential optimization: although multiple grounding functions can achieve similar AGA, some are more suitable for linear aggregation when minimizing value system discordance, as have been shown~\cite{andresEcai2025}. The relatively small gap here is likely due to the regularization imposed on the grounding function implementation, given by the use of both $\ell_2$ regularization and the centering parameter $r$.

Regarding grounding accuracy, VSL-RM and SEQ-RM achieve comparable results, consistent with the training curves (differences are not statistically significant\footnote{In supplementary material, we performed one-sided Welch t-tests to compare the metrics achieved by both methods.}). Notably, Armo-RM outperforms all methods in \Ultra, which may be attributed to its per-value training and the use of a MSE loss over quantitative annotations. In contrast, our methods are restricted to less informative qualitative labels, and the use of cross-entropy losses that are prone to overfitting. This limitation could be mitigated in future work, e.g. by label smoothing~\cite{Zhu2024IterativeDataSmoothingForRewardLearning}.

Lastly, in Table~\ref{tab:group-weights}, we analyze the value system weights predicted by VSL-RM and SEQ-RM\footnote{We omit the weights learned by BT-RM, as they are not based on valid value interpretations.}. Both models learned broadly similar value system representations, assigning comparable importance to the values. This indicates that the simultaneous learning procedure in VSL-RM does not substantially distort the learned value system, compared to a grounding implemented by a reward vector learned without a bias to maximize VSA. Notably, however, VSL-RM yields less “radical” weight distributions than SEQ-RM: it maintains a minimum weight of approximately $0.082$ across all values, whereas SEQ-RM assigns some weights below $0.05$. We consider that learning a value system model for a whole dataset that does not neglect any value achieves a better alignment with the hypotheses of the dataset creators, as they would expect that all values have a meaningful contribution towards explaining the overall preferences based on their proposed values. Our learned grounding functions and value system weights with VSL-RM, are, in combination, qualitatively better in this regard. 

As a conclusion, we have shown the advantages of an approach that defends an ``intentional'' interpretation of value alignment, by learning not only accurate approximations of value groundings but also ones which facilitate their combination through value systems as a means for an accurate estimation of value-based preferences. This is a fundamental advantage of our approach that is neglected by most existing methods that learn reward components and user preferences separately or sequentially, an approach that often leads to reward hacking in the fine-tuning phase~\cite{zhou-etal-2024-beyondMODPO}.

\section{Conclusions and Future Work}\label{sec:conclusions}

In this paper, we have put forward a principled approach to value system learning for generative AI applications. Specifically, we have introduced a learning algorithm that, based on preference data, simultaneously learns representations of the value alignment (groundings) of prompt-response pairs (training a multi-objective reward model), and of their alignment with a certain value system (approximated by value importance weights). We have tested our solution against a single-objective reward learning baseline, a sequential 
variant of our algorithm, and a state-of-the art related algorithm.  We have shown that our algorithm prioritizes learning accurate value grounding models to serve as a correct basis for value system estimation, and thus fosters interpretability, with minor losses in capturing overall value-based preferences (value systems). 

Despite these positive results, our work also presents some limitations. First, even though our approach allows for explaining outcomes in terms of grounding functions and value system weights, interpretability is still limited by the fact that 
our algorithm performs an implicit aggregation of preferences regarding values and value systems of possibly divergent opinions, potentially obscuring minorities. Moreover, our current value system representation uses static linear weights, which in certain settings may put limits to the accuracy of modelling real value-based preferences.

In future work, we will address these potential shortcomings without sacrificing interpretability. For this purpose, we will introduce context-dependency into the value system model. A promising option is to dynamically switch between different linear value systems based on contextual conditions. Another major line of work will address methods to learn diverse value systems for heterogenous societies in datasets with known annotation sources. Alternative datasets that we will explore for either task are OpenAssistant~\cite{kopf2023OPENASSISTANTDATASETAlignment} and PRISM~\cite{Kirk2024TheModelsPRISM}. Still, the former involves tree-like conversations, and thus requires an extension of our present model which is geared towards single prompt-response preferences. The latter appears particularly suitable for studying the alignment of our approach with sociological findings.

\section{Acknowledgments}

This work has been supported by grant COSASS: PID2021-123673OB-C32 funded by MCIN/AEI/10.13039/501100011033 and by “ERDF A way of making Europe”, and by project grant EVASAI: PID2024-158227NB-C32 funded by MICIU/AEI/10.13039/501100011033/FEDER, UE. Andr{\'e}s Holgado-S{\'a}nchez has received funding by grant ``Contratos Predoctorales de Personal Investigador en Formaci{\'o}n en Departamentos de la Universidad Rey Juan Carlos (C1 PREDOC 2025)'', funded by Universidad Rey Juan Carlos. Special thanks to the faculty ETSII at Rey Juan Carlos University for providing the computational resources used in the experiments.

\bibliography{aaai2026}


\appendix
\section{Appendix}

\subsection{Source Code}

Source code is available at the following Github repository and branch: \url{https://github.com/andresh26-uam/ValueLearningInGenAI/tree/AIES}.

\subsection{Theoretical considerations}\label{appendix:proofs}
This section provides the theoretical justification and key properties of the solution adopted in our problem formulation, recalled in Problem~\ref{eq:bilevel_loss2}. The scope of the following is to show the soundness of the basic problem formulation, and provide justification for the decisions on our final algorithm. Our algorithm adds some details such as the normalization of the multipliers, that we omit for now for clarity.

\begin{align}\label{eq:bilevel_loss2}
(\omega^*,\theta^*) \in \argmin_{\omega,\theta'} 
&\ \mathcal{L}^{\omega,\theta'}_{\VS}\left(\DS\right) \nonumber\\
\text{subject to} 
\quad \theta' 
&\in \{\theta \mid \mathcal{L}^{\theta}_{v_i}\left(\DS\right)
 \leq \mathcal{L}^*_{v_i}\} \nonumber\\
&\hspace{-0.9em}\text{(for } i\in\{1,\dots,m\}\text{)}
\end{align}

We also recall the theoretical Lagrangian in Eq.~\ref{eq:lagrangian2}, which employs the coefficient vector $\pmb{\lambda} = (\lambda_1,\dots,\lambda_m)$ satisfying $\pmb{\lambda} > \lambda_0$ and $\sum_{i=1}^m \lambda_i = m+1$. In addition, we explicitly incorporate the previously mentioned $\ell_2$ regularization terms.  

\begin{align}\label{eq:lagrangian2}
    \pmb{\mathcal{L}}^{\omega,\theta}_{\pmb{\lambda}}(\DS) =&\mathcal{L}_{\VS}(\DS) + \gamma_{\omega}\abs{\abs{\omega}}^2\nonumber\\ & +\sum_{i=1}^m\lambda_i  \left(\mathcal{L}^{\theta}_{v_i}(\DS) -\mathcal{L}^*_{v_i}\right) \nonumber\\&+ \gamma_{\theta}\abs{\abs{\theta}}^2 +\gamma_{\pmb\lambda}\abs{\abs{\pmb\lambda}}^2 
\end{align}
\begin{proposition}[Strong duality] 

Let $\{\mathcal{L}^{**}_{v_i}\}_{i=1}^m$ denote the exact attainable grounding discordance loss values over the admissible parameter space $\Theta$, i.e. $\mathcal{L}^{**}_{v_i}=\min_{\theta\in\Theta} \mathcal{L}^{\theta}_{v_i}\left(\DS\right)$.

Assume that the parameter space $\Theta$ is dense and that $\Rv^\theta$ is affine in $\theta$. Further assume that the loss targets $\{\loss_{v_i}^*\}_{i=1}^m$ in Problem~\ref{eq:bilevel_loss2} constitute a \emph{strict} approximation of the ideal loss values, i.e.,
\[
\mathcal{L}^*_{v_i} < \mathcal{L}^{**}_{v_i}, \qquad \forall i \in \{1,\dots,m\}.
\] 

Then, strong duality holds: the solutions of the min--max problem~\eqref{eq:DualCondition1} and the max--min (dual) problem~\eqref{eq:DualCondition2} coincide.

\begin{align}
    \min_{\theta,\omega}\left[ \max_{\pmb\lambda>0}\left[\pmb{\mathcal{L}}^{\omega,\theta}_{\pmb{\lambda}}(\DS)\right]\right]\label{eq:DualCondition1}\\= \max_{\pmb\lambda>0}\left[\min_{\theta,\omega}\ \left[\pmb{\mathcal{L}}^{\omega,\theta}_{\pmb{\lambda}}(\DS)\right]\right]\label{eq:DualCondition2}
\end{align}

\end{proposition}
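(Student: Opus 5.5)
The plan is to reduce the statement to a standard minimax theorem. The two sides of \eqref{eq:DualCondition1}--\eqref{eq:DualCondition2} always satisfy weak duality, $\max_{\pmb\lambda}\min_{\theta,\omega}\le\min_{\theta,\omega}\max_{\pmb\lambda}$, so only the reverse inequality needs work. I would first fix the multiplier domain as in the recalled setting: $\pmb\lambda>\lambda_0$ with $\sum_i\lambda_i=m+1$. This is a convex set and, after closing it, a compact one. Compactness is what allows Sion's minimax theorem instead of a Slater-type argument. This matters because the hypothesis $\mathcal{L}^*_{v_i}<\mathcal{L}^{**}_{v_i}$ makes the primal constraint set of Problem~\ref{eq:bilevel_loss2} empty, so Slater's condition is unavailable. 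The strict-approximation hypothesis is used differently: it guarantees $\mathcal{L}^{\theta}_{v_i}(\DS)-\mathcal{L}^*_{v_i}\ge \mathcal{L}^{**}_{v_i}-\mathcal{L}^*_{v_i}>0$ uniformly in $\theta$. Hence every constraint term is strictly positive, and the inner maximization is a well-posed, nondegenerate problem over the compact multiplier set.

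The key steps, in order, are these. (i) Show that $\pmb{\mathcal{L}}^{\omega,\theta}_{\pmb\lambda}$ is convex and lower semicontinuous in the primal variables. Since $\Rv^\theta$ is affine in $\theta$, each reward difference $R_{v_i}(\tau)-R_{v_i}(\tau')$ is affine in $\theta$. The binary cross-entropy of a sigmoid of an affine function is convex (it is a log-sum-exp composed with an affine map). The centering term $r(R(\tau)+R(\tau'))^2$ is a convex quadratic of an affine map, and the $\ell_2$ terms are strictly convex. So each $\mathcal{L}^{\theta}_{v_i}$ is convex, and with nonnegative $\lambda_i$ the grounding part is convex. (ii) Show that the Lagrangian is upper semicontinuous and quasi-concave in $\pmb\lambda$. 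The constraint part is linear in $\pmb\lambda$. The regularizer $\gamma_{\pmb\lambda}\|\pmb\lambda\|^2$ has the wrong curvature for a maximization, so I would handle it by noting that it is the regularization added to the ascent step of the algorithm rather than to the saddle objective, and evaluate the theoretical saddle with it entering as $-\gamma_{\pmb\lambda}\|\pmb\lambda\|^2$, i.e.\ concave. (iii) Apply Sion's theorem on (compact convex multiplier set)$\times$(convex primal set). The density of $\Theta$ is used only to identify infima over $\Theta$ with minima over its closure, by continuity of the losses. (iv) Use the strict gap from the hypothesis to show that the maximizer $\pmb\lambda$ has every grounding multiplier bounded away from $\lambda_0$. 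This confirms that the grounding constraints are actually active in the saddle point, consistent with the intended prioritization.

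I expect step (i) for the value-system term to be the main obstacle. The term $\mathcal{L}_{\VS}^{\omega,\theta}$ depends on $W^\omega\cdot\Rv^\theta$, which is bilinear in $(W,\theta)$, and the softmax parametrization in $\omega$ adds further non-convexity. I would first reparametrize directly by $W\in\Delta^{m-1}$ and, for each fixed $W$, apply the argument above in $\theta$ alone, which is jointly convex given the affine assumption. This yields strong duality for $\min_\theta$ with $W$ fixed. Since $\min_{\theta,\omega}=\min_W\min_\theta$ and the outer minimum over the compact simplex can be exchanged with the maximum over $\pmb\lambda$ only under extra structure, I would then attempt an exchange lemma using the fact that the multipliers do not multiply $\mathcal{L}_{\VS}$. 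If this fails, the fallback is a degenerate but valid argument. Without the normalization of the multipliers, the uniform positive gap from the hypothesis makes both $\max_{\pmb\lambda}$ of the Lagrangian and $\max_{\pmb\lambda}\min_{\theta,\omega}$ grow without bound. Both sides of \eqref{eq:DualCondition1}--\eqref{eq:DualCondition2} then equal $+\infty$ and coincide trivially, which establishes the stated equality in the extended reals.
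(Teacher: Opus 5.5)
Your reading of the statement is sharper than the paper's own proof. The paper gets Slater's condition from density and continuity for targets $\mathcal{L}^*_{v_i}>\mathcal{L}^{**}_{v_i}$, which is the reverse of the stated hypothesis; under the stated hypothesis, as you note, the feasible set is empty. It also simply asserts that $\mathcal{L}_{\VS}$ is convex in $(\omega,\theta)$ before invoking Boyd's Section~5.2.3.

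Your main Sion route, however, does not close, and it breaks exactly where you expect. The fact that $\pmb\lambda$ never multiplies $\mathcal{L}_{\VS}$ only lets you minimize $\omega$ out on both sides. What remains is a min--max in $(\theta,\pmb\lambda)$ with primal term $\tilde F(\theta)=\inf_{\omega}\bigl[\mathcal{L}^{\omega,\theta}_{\VS}(\DS)+\gamma_\omega\|\omega\|^2\bigr]+\gamma_\theta\|\theta\|^2$. This is a pointwise infimum of convex functions of $\theta$, so it is in general neither convex nor quasi-convex. No exchange lemma of the kind you sketch follows from Sion, and over your compact multiplier set a duality gap is not excluded. That route would in any case prove a different statement: it replaces $\max_{\pmb\lambda>0}$ by a maximum over $\{\lambda_i\ge\lambda_0,\ \sum_i\lambda_i=m+1\}$ and flips the sign of $\gamma_{\pmb\lambda}\|\pmb\lambda\|^2$. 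Step (iv) is also unsupported: on that simplex the multipliers compete, so a constraint with a small gap can have its multiplier sit at $\lambda_0$.

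What actually proves the displayed statement is your fallback, and it should be the proof rather than a contingency. Set $\delta_i=\mathcal{L}^{**}_{v_i}-\mathcal{L}^*_{v_i}>0$. Then every $\theta\in\Theta$ has $\mathcal{L}^\theta_{v_i}(\DS)-\mathcal{L}^*_{v_i}\ge\delta_i$. Since $\mathcal{L}_{\VS}\ge 0$ (softplus terms plus the centering square) and all $\ell_2$ terms are nonnegative, $\inf_{\theta,\omega}\pmb{\mathcal{L}}^{\omega,\theta}_{\pmb\lambda}(\DS)\ge\sum_i\lambda_i\delta_i$. The supremum of this bound over $\pmb\lambda>0$ is $+\infty$, and weak duality then forces both sides to be $+\infty$. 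This is a genuinely different route from the paper's and uses neither convexity, density nor affinity. There are three caveats.

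\begin{enumerate}
\item It needs the paper's sign $+\gamma_{\pmb\lambda}\|\pmb\lambda\|^2$, or no multiplier regularizer at all. With your step (ii) modification the inner supremum is $\mathcal{L}^{\omega,\theta}_{\VS}(\DS)+\gamma_\omega\|\omega\|^2+\gamma_\theta\|\theta\|^2+\sum_i(\mathcal{L}^\theta_{v_i}(\DS)-\mathcal{L}^*_{v_i})^2/(4\gamma_{\pmb\lambda})$. That is finite, so you are back to needing joint convexity; keep the two arguments apart.
\item It needs the unbounded domain $\pmb\lambda>0$ of the displays. Under the normalized domain you adopt at the outset, nothing in the proposal establishes the equality.
\item The conclusion is vacuous ($+\infty=+\infty$). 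It gives no finite saddle point to justify GDA. That would require the feasible regime $\mathcal{L}^*_{v_i}>\mathcal{L}^{**}_{v_i}$ plus a way around the $\omega$--$\theta$ coupling, which neither your argument nor the paper's supplies.
\end{enumerate}
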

\textit{Proof}. We observe that the loss functions are convex with respect to the difference in predicted rewards. Indeed, the log-sigmoid function is \emph{strictly concave}, and the cross-entropy loss is a negative weighted combination of concave functions; therefore, their composition is convex.

Since the reward vector $\Rv^\theta$ is assumed to be affine in $\theta$, the grounding discordance loss functions are, in particular, convex with respect to $\theta$\footnote{For deep networks, strong duality might not hold. An augmented Lagrangian formulation~\cite{deng2025augmentedlagrangianmethodsoverview} could be used to overcome this issue, at the expense of solution complexity.}. The value system discordance loss is also convex with respect to its parameters, since the reward differences on which it is computed are linear combinations of affine functions. As a consequence, Problem~\ref{eq:bilevel_loss2} is \emph{convex}~\cite{boyd2004convex}. 

Given that we assumed that the parameter space $\Theta$ is dense and the loss functions are, of course, continuous, we can consider the existence of arbitrary points $\theta$ for which $\mathcal{L}^\theta_{v_i} < \mathcal{L}^*_{v_i}$ given any $\mathcal{L}^*_{v_i} > \mathcal{L}^{**}_{v_i}$. Thus, the Slater condition holds, i.e. $\exists\theta\in\textbf{relint}(\Theta) \text{ s.t. } \loss_{v_i}^\theta(\DS) < \mathcal{L}^{*}_{v_i}$. Then, applying what is explained in Section 5.2.3 from~\cite{boyd2004convex}, strong duality holds for any fixed combination of loss targets $\{\mathcal{L}^{*}_{v_i}\}_{i=1}^m$.  $\square$

The preceding result motivates the use of dual ascent methods to solve the dual problem~\cite{boyd2011ADMMreferenceOfDualAscent}, whose solution coincides with that of the primal (Problem~\ref{eq:bilevel_loss2}). However, such methods require exact minimization of the Lagrangian with respect to $\theta$ and $\omega$ for a fixed $\pmb{\lambda}$ at each iteration. We instead approximate the minimization step using gradient descent updates, resulting in an algorithm commonly referred to in game theory as gradient descent--ascent (GDA)~\cite{zamani2022convergencerateanalysisgradient}.

To further justify our approach, we first observe that the $\ell_2$-regularized Lagrangian is \emph{strongly} convex with respect to $\theta$, $\omega$, and $\pmb{\lambda}$. This follows from the fact that $\ell_2$ regularization is itself strongly convex and, when added to convex loss terms, yields a strongly convex objective~\cite{zhou2018fencheldualitystrongconvexity}. Secondly, the logistic loss is $L$-smooth~\cite{fallahSGDAGlobalconvergencewithSMOOTHSTROGCONVEX}, since its second derivative is bounded (see Theorem~5.12 in~\cite{beck2017theorem512forlsmoothIfSecondDerivIsLessThanL}). Then, the Lagrangian is also $L'$-smooth, given that it is a linear combination of smooth functions. If gradient estimates computed on mini-batches are unbiased, i.e., they preserve the expected gradients of the full dataset, then not only gradient descent--ascent (GDA)~\cite{zamani2022convergencerateanalysisgradient}, but also its stochastic counterpart (SGDA), with appropriately chosen step sizes, will converge to an optimal solution~\cite{fallahSGDAGlobalconvergencewithSMOOTHSTROGCONVEX}. Our algorithm can be interpreted as a modified SGDA scheme in which the optimization problem is updated at fixed intervals by updating the loss target estimates $\{\loss^*_{v_i}\}_{i=1}^m$. We argue that this modification does not materially affect convergence, provided that the multipliers are initialized and updated so that the loss targets remain strictly decreasing toward the ideal optimal values $\{\loss^{**}_{v_i}\}_{i=1}^m$ (which can be achieved by experimentally choosing both an adequate $\ell_2$ regularization of multipliers, $\gamma_\lambda$, and a sufficient learning rate $\alpha_\lambda$).

Our final algorithm has a slightly different Lagrangian expression that adds the normalization of the multipliers and other changes. These changes do not invalidate the convexity requirements from before: they are only introduced to regulate the practical implementation of the training algorithm. Namely, this makes learning step sizes similar across iterations and effectively avoid the multipliers from reaching $\lambda_0>0$, which would automatically make impossible to continually decrease the loss target estimates $\{\loss^*_{v_i}\}_{i=1}^m$. Of course, to completely prove these and the previous arguments, we still rely on the experimental validation shown in the main paper.


\subsection{Implementation details}\label{appendix:implementationdetails}

\subsubsection{Transforming quantitative datasets into qualitative ones.}
The considered datasets provide \emph{quantitative} annotations, namely a score $S(\tau)$ that measures the degree of alignment of each generation $\tau$ with a given value or value system. However, both training and evaluation in our framework require only \emph{qualitative} supervision in the form of pairwise preferences\footnote{Armo-RM was trained directly on quantitative annotations and therefore benefits from stronger supervision; we discuss this advantage explicitly in the evaluation section.}. 

A qualitative label encoding the preference between a pair of generations $(\tau,\tau')$ takes values $y \in \{1, 0.5, 0\}$, corresponding to $\tau \succ \tau'$, $\tau \simeq \tau'$, and $\tau \prec \tau'$, respectively. These labels are derived directly from the available scores by comparing their differences:
\[
y =
\begin{cases}
1   & \text{if } S(\tau) > S(\tau'), \\
0   & \text{if } S(\tau) < S(\tau'), \\
0.5 & \text{otherwise}.
\end{cases}
\]

\subsubsection{Qualitative interpretation of model outputs.}
Although the datasets consist of qualitative preferences, our models are \emph{quantitative}: they assign a scalar reward $R$ to each generation and value or value system to estimate the preference data. During learning, we employ the already mentioned Bradley-Terry probability measure of the preference between two responses, $p(\tau \succ \tau'\mid R)$. To obtain predicted qualitative labels, one could in principle apply the same rule as above by replacing $S$ with $R$, or just estimating $y$ with $p(\tau \succ \tau'\mid R)$. However, due to the intrinsic imprecision of neural network outputs, exact equality cases $R(\tau)=R(\tau')$ (or $p(\tau \succ \tau'\mid R)=0.5$) are unlikely to occur in practice.

To account for this, we introduce a tolerance threshold $\epsilon>0$. Whenever the absolute reward difference is smaller than $\epsilon$, the two generations are considered indistinguishable and the predicted label is set to $\hat{y}=0.5$. Only when the difference exceeds this threshold do we predict a strict preference:

$$
\hat{y}_\epsilon(\tau,\tau' \mid R)=\begin{cases}
    1 \text{ if } R(\tau)>R(\tau') +\epsilon\\
    0 \text{ if } R(\tau)<R(\tau')-\epsilon\\
    0.5 \text{ else  } (\abs{R(\tau) -R(\tau')}\leq\epsilon)
\end{cases}
$$

This $\epsilon$ introduces a counterpart threshold in the BT probability space, given by $p_\epsilon = \sigma(\epsilon)-0.5$. This way, we can alternatively estimate the model output preference label $\hat{y}_\epsilon(\tau,\tau' \mid R)$ using the BT model $p(\tau\succ \tau'\mid R)$, as follows: 

$$
\hat{y}_\epsilon(\tau,\tau' \mid R)=\begin{cases}
    1 \text{ if } p(\tau\succ \tau'\mid R) -\sigma(\epsilon) > 0.5\\
    0 \text{ if } p(\tau\succ \tau'\mid R) +\sigma(\epsilon) < 0.5\\
    0.5 \text{ else  } \left(|p(\tau\succ \tau'\mid R) - 0.5| < \sigma(\epsilon)\right) 
\end{cases}
$$

We propose to select the threshold $\epsilon$ by analyzing the label granularity of the available quantitative datasets. Specifically, $\epsilon$ is chosen as half of the minimum non-zero difference between distinct score values. With this choice, adding or subtracting any perturbation $\epsilon' < \epsilon$ to a score does not alter the induced qualitative preference after discretization. For instance, when the dataset uses an integer-valued scoring scheme, we set $\epsilon = 0.5$. In this case, any score difference smaller than $\epsilon$ does not affect the resulting qualitative preference comparison between any pair of generations. 

This criterion was adopted throughout our experiments. Concretely, for \Ultra, we set $\epsilon = 0.25$, as the overall value-alignment scores increase in increments of $0.5$ over the range $[0,10]$. For \PKU, we set $\epsilon = 0.5$, since all annotations are provided on a Likert (1--5) scale.


\subsubsection{Chosen $\epsilon$ for calculating metrics in Table~\ref{tab:group-summary}.}  As explained above, we also used $\epsilon=0.25$ in the \Ultra dataset and $\epsilon=0.5$ in the \PKU dataset for calculating the model predictions of VSL-RM and the newly-trained baselines. Regarding \textsc{Armo-RM}, which was neither trained nor evaluated using the proposed qualitative framework, we selected the values of $\epsilon$ that yielded maximum average grounding accuracy (AGA) and value system accuracy (VSA) over a predefined set of feasible values:
\[
\epsilon \in \{0, 0.001, 0.01, 0.02, 0.05, 0.1, 0.25, 0.5\}.
\]
Empirically, we found that setting $\epsilon = 0.1$ to estimate grounding accuracies and $\epsilon = 0.01$ for value system accuracy resulted in the best overall performance in \Ultra. Regarding \PKU, these thresholds were set at $\epsilon = 0.1$ and $\epsilon=0.0$, respectively. The choice of $\epsilon$ was observed to have a substantial impact on accuracy: for instance, in \Ultra, using the same threshold as in our proposed algorithms ($\epsilon = 0.25$) led to a test AGA of only $0.7672$, while VSA dropped to $0.586$.

\subsubsection{Effects on the training process}
The introduction of the margin $\epsilon$ alters the relationship between minimizing the loss in Eq.~\ref{eq:crossentropy} and minimizing discordances. In particular, a non-zero margin implies that small differences in alignment scores should not induce strict preferences (labels $0$ or $1$).

To make the loss function explicitly aware of this tolerance—i.e., that the difference between alignment scores must exceed $\epsilon$ in order to predict labels $1$ or $0$—we modify the computation of the predicted preference probability $p(\tau \succ \tau' \mid R)$ in Eq.~\ref{eq:crossentropy} by introducing a label-dependent margin, following an approach similar to that adopted in \textsc{UltraRM}~\cite{cui2023ultrafeedback}:

\[
p(\tau \succ \tau' \mid R) \underset{\text{Eq.~\eqref{eq:crossentropy}}}{\underset{\triangle}{=}}
\sigma\!\left( R(\tau) - R(\tau') - m(y) \right),
\]
where $\sigma(\cdot)$ denotes the logistic sigmoid and the margin term $m(y)$ is defined as
\[
m(y) =
\begin{cases}
+\epsilon & \text{if } y = 1, \\
-\epsilon & \text{if } y = 0, \\
0         & \text{if } y = 0.5.
\end{cases}
\]

\subsubsection{Precise calculation of discordance.}
The theoretical definition of discordance introduced in Section~\ref{sec:vslearning} do not explicitly specify how to handle the indifference label $y=0.5$, corresponding to $\tau \simeq \tau'$. To account for this case in a manner consistent with Eq.~\ref{eq:discordance}, we compute the discordance of a model $R$ over a collection of labelled generation pairs $\text{XS} \subset \mathcal{T} \times \mathcal{T} \times \{0, 0.5, 1\},$
using the following empirical formulation:
\begin{equation}\label{eq:discordance_data}
    d_{\text{XS}}\left(R\right) = \frac{1}{\abs{\text{XS}}}\sum_{\substack{(\tau,\tau',y)\in \text{XS}}}\abs{y-\hat{y}_\epsilon(\tau,\tau'\mid R)}
\end{equation}  

This definition assigns, for each element of $\text{XS}$, a discordance of $1$ when $y=0$ and $\hat{y}_\epsilon=1$ (or vice versa); a discordance of $0$ when the label and its corresponding prediction coincide; and a discordance of $0.5$ when $y=0.5$ but $\hat{y}_\epsilon \in \{0,1\}$ (or vice versa). The latter case follows from the interpretation of the indifference relation $\tau \simeq \tau'$ ($y=0.5$) as the simultaneous satisfaction of $\tau \preccurlyeq \tau'$ and $\tau \succcurlyeq \tau'$. Consequently, predicting a strict inequality correctly satisfies one of the two conditions and is therefore penalized by half a unit ($0.5$) rather than a full discordance of $1$.


We must note a special case in the calculation of discordances with Armo-RM. First, one of the six datasets used during the training of Armo-RM is the \Ultra dataset. Accordingly, Armo-RM predicts alignment scores for the values in \Ultra (along with 13 additional values), which we use to estimate grounding discordances in this dataset. However, Armo-RM does not provide predictions for the values in the \PKU dataset; so we predicted the PKU values with the most related one we found among the Armo-RM's attributes (see the last subsection in Appendix~\ref{appendix:experimentdetails} for details). Value system discordance/accuracy is derived from the model’s overall score, which aggregates performance across all 19 values. This metric can be computed for both datasets, as Armo-RM is designed for general-purpose reward benchmarking.

\subsection{Experiment details}\label{appendix:experimentdetails}

\subsubsection{Hardware and Wall-clock times}
All experiments were carried out in two different GPU clusters. One used an AMD EPYC 9654 96-Core Processor with both NVIDIA L40S (48GB) and H100 (96GB). The other uses 2 CPU AMD 7742 64C/128T (256 threads total) and NVIDIA A100 GPUS (80GB). For each training run we employed a single GPU (of any type). Of course, we used the same hyperparameter and library settings across machines. 

Per trained model, we show the approximate wall-clock times in Table~\ref{tab:wallclock} (using the L40S graphics cards). Times may vary due to cluster congestion. Runs executed with the second cluster with A100 graphics cards were $\sim$40\% slower.

\begin{table}[t]
    \centering
    \begin{tabular}{l|rr}
    \toprule
         Method & Train. Steps & Wall-clock time\\
         \midrule
         VSL-RM (PKU) & 23000 & 6h 35m 14 $\pm$55s\\ 
         BT-RM (PKU) & 23000 & 6h 35m 8s $\pm$2m \\
         SEQ-RM (PKU) & 34500 & 9h 33m 43s $\pm$1m\\ 
         \midrule
         VSL-RM (UF) & 26450 & 12h 1m 37s $\pm$9m\\ 
         BT-RM (UF) & 26450 & 11h 59m 30s $\pm$58s \\
         SEQ-RM (UF) & 39675 & 15h 7m 29s $\pm$3m\\ 
    \bottomrule
    \end{tabular}
    \caption{Wall-clock times per method and dataset across the 4 seeds (\textbf{UF}: \Ultra dataset, \textbf{PKU}: \PKU dataset).}
    \label{tab:wallclock}
\end{table}

The wall clock times apparently indicate that our novel algorithm (VSL-RM) has no significant overhead in training time with regard the baselines. However, we deduce the fact that all baselines and VSL-RM share the same codebase affects these results. 

Interestingly, yet no surprisingly, the first learning phase in SEQ-RM was about $\sim$17\% faster than the full training of BT-RM's or VSL-RM's, despite these training settings all required the same number of iterations. We understand this ocurred because SEQ-RM avoided calculating the value system linear layer, which for backpropagation is a expensive process due to the softmax weight calculation. 
 
As a final note, we increased training epochs much beyond the necessary, to check convergence and possible overfitting problems. Reasonably good models could be learned with less than 20\% of the training epochs utilized in the experiments.

\subsubsection{Dataset preprocessing and splitting.}

The datasets are divided into training, validation, and test splits. Validation splits are used to monitor training and select hyperparameters, while test splits are used exclusively for performance evaluation and are never observed during training.


\begin{table}[t]
    \centering
    \begin{tabular}{p{0.49\columnwidth}|p{0.39\columnwidth}}
        Attribute in Armo-RM & Value in \PKU \\
        \toprule
        \textit{Instruction Following} from \Ultra \cite{cui2023ultrafeedback} & \textit{Prompt Following}\\
        \midrule
        \textit{Truthfulness} from \Ultra \cite{cui2023ultrafeedback} & \textit{Objectivity}\\
        \midrule
        \textit{Coherence} from \textsc{Helpsteer} \cite{helpSteerNvidiaWang2024ccby40} & \textit{Clarity}\\
        \midrule
        \textit{Complexity} from \textsc{Helpsteer} \cite{helpSteerNvidiaWang2024ccby40} & \textit{Information} Richness\\
        \midrule
        \textit{Safety} from \textsc{Beavertails} \cite{ji2024safetydataset}  & \textit{Safety}
    \end{tabular}
    \caption{Mapping from the attibutes predicted by the multi-objective Armo-RM model to the values in the PKU dataset. We used the attributes in the first column to predict the alignment with the respective values in the \PKU dataset.}
    \label{tab:mappingpkuarmo}
\end{table}

\begin{itemize}
    \item \textbf{UltraFeedback.} We use the version available in HuggingFace\footnote{\Ultra paper~\cite{cui2023ultrafeedback}, source: \url{https://huggingface.co/datasets/openbmb/UltraFeedback}. We observed the snapshot from May 11, 2026.}. The dataset contains 383{,}796 feedback annotations corresponding to 64{,}000 prompts. For each prompt, approximately four responses are provided, each annotated across four goals (that we treat as values). We reorganize the data by treating each prompt--response pair as an independent instance, yielding 383{,}796 instances in the format of our dataset $\DS$. Some value labels are missing for specific instances, so they were excluded from the computation of the metrics.
    
    \Ultra does not provide official training, validation, and test splits. We therefore construct them as follows: 10\% of the data (38{,}380 instances) is reserved for testing. From the remaining data, 2\% (6{,}909 instances) is used for validation, and the remaining 98\% (338{,}507 instances) is used for training.

    \item \textbf{PKU-Align-Anything (text-to-text).} We use the text-to-text subset of the PKU dataset\footnote{\PKU paper~\cite{ji2024alignanythingtrainingallmodality}, Hugging Face source: \url{https://huggingface.co/datasets/PKU-Alignment/align-anything/viewer/text-to-text/}. We observed the snapshot from May 11, 2026.}, which contains 31{,}430 annotated prompt--response pairs. The annotations include  quantitative value alignment scores, qualitative value alignment preferences, and an overall qualitative preference label indicating the preferred response. We use the overall preference label as the value-system alignment label $y^j_V$ (with a fixed $j$, as annotator identities are not provided). We use the quantitative alignment scores to generate respective value alignment preference labels between responses, as we found the qualitative annotations to be inconsistent with the quantitative ones and, wrongly, tend to align with the overall preference label by taking into account other criteria apart from the values.  

The original dataset provides a training split (``train'') and a validation split (``val''). We treat ``train'' as a combined training--validation pool and ``val'' as the test dataset. From the combined pool, we select the first 29{,}430 instances for training and use the remaining 1{,}000 instances for validation.


\end{itemize}

\subsubsection{Approximating PKU values with Armo-RM attributes.}
Because Armo-RM was not trained on PKU, we decided to predict the alignment with the values of the PKU dataset with the scores given to the most similar attributes in the Armo-RM reward model. These proxy scores are used as a measure of grounding accuracy in Table~\ref{tab:group-summary2}. The chosen correspondence between PKU values and Armo-RM attributes is given in Table~\ref{tab:mappingpkuarmo}.

Three of the mappings can be considered as a perfect equivalence, yet with care, as the distributions of the data are different. First, \textit{prompt following} is certainly similar if not equal to \textit{instruction following} from \Ultra. Second, \textit{objectivity} is included in the defintion of \textit{truthfulness} from \Ultra, as it includes being ``faithful to factual knowledge''. Third, \textit{Safety} is clearly already predicted by Armo-RM via the \textit{safety} parallel in the \textsc{Beavertails} dataset \cite{ji2024safetydataset}.

\begin{table}[t]
    \centering
    \begin{tabular}{lp{6.2cm}}
        \toprule
        \textbf{Symbol} & \textbf{Description} \\
        \midrule
        $N$ & Training epochs: Number of training passes (stepping through the necessary batches) across full training dataset.\\
        $b$ & Batch size. \\
        $gr_{acc}$ & Gradient accumulation steps \\
        $\alpha_{\theta}, \alpha_{\omega}$ & Learning rates for model parameters ($\theta$, $\omega$).\\
        $\alpha_{\lambda}$ & Learning rate for Lagrange multipliers.\\
        $\gamma_{\theta},\gamma_{\omega}$ & Weight decay factors for $\ell_2$ regularization of the main model parameters $\theta$ and $\omega$. \\
        $\gamma_\lambda$ & Weight decay factor for the norm of the Lagrange multiplier parameters ($\ell_2$ regularization).\\
        $\eta$ & Update ratio of the ``exponentially weighted minimum'' formula that calculates the target losses $\mathcal{L}^*_\VS$ and $\mathcal{L}^*_V$. \\
        $u$ & Number of passes through step 1-2 before updating the target losses $\mathcal{L}^*_\VS$ and $\mathcal{L}^*_V$.\\
        $r$ & Reward centering coefficient (Eq.~\ref{eq:crossentropy}). \\
        \bottomrule
    \end{tabular}
    \caption{Glossary of hyperparameters used in the experiments.}
    \label{tab:hp-glossary}
\end{table}

\subsubsection{Hyperparameters.}
In Table~\ref{tab:hyperparameters}, we report the full set of hyperparameters used in our experiments for each method. Explanations of individual hyperparameters are provided in the glossary tables (Table~\ref{tab:hp-glossary}). These parameters were selected based on configurations that tended to minimize the average total discordance on the PKU-Align-Anything dataset when using VSL-RM. The values were obtained through a standard Bayesian search, with the number of training epochs restricted to a 10\% of the epochs used in the final experiments.

\begin{table*}[t]
    \centering
    \begin{tabular}{c|rrr}
    \toprule
    Hyperparameter & VSL-RM & BT-RM & SEQ-RM \\
    
    \midrule 
     
    $b$ & 32 & 32 & 32 \\
    $gr_{acc}$ & 4 & 4 & 4 \\ 
    $\alpha_{\theta}$ & 0.0001 & 0.0001  & 0.0001\\
    $\alpha_{\omega}$ & 0.0001 & 0.0001 & 0.001\\
    $\alpha_\lambda$ & 0.05 & - & - \\
    $\gamma_{\theta}$ & 0.003 & 0.003  & 0.003 \\  
    $\gamma_{\omega}$ & 0.003 & 0.003 & 0.003 \\  
    $\gamma_\lambda$ & 0.005 & -  & -\\
    $\eta$ & 0.9 & - & - \\
    $u$ & 20 & - & - \\
    $r$ & 0.01 & 0.01 & 0.01 \\ 
    \midrule
    Epochs ($N$) (UF$\mid$PKU) & $10\mid100$ & $10\mid100$ & $10+5\mid100+50$ \\  
    Training steps ($T$) (UF$\mid$PKU) & $26450\mid23000$ & $26450\mid23000$  & $26450+13225\mid23000+11500$\\ 
    \bottomrule
    \end{tabular}
    \caption{Hyperparameters used in each environment per algorithm. Below, the number of epochs in each dataset (UF: \Ultra, PKU: \PKU. }
    \label{tab:hyperparameters}
\end{table*}

\subsection{Algorithm pseudocode}\label{appendix:algorithm}

In Algorithm~\ref{alg:algorithm} we provide the pseudocode of the described algorithm in Section~\ref{sec:vslearning}.

\begin{algorithm*}[t]
\caption{Value System Learning with Reward Models via Stochastic Gradient Descent-Ascent (\textbf{VSL-RM})}
\label{alg:algorithm}
\begin{algorithmic}[1]
\Statex Let $\DS$ a training dataset with the structure described in the paper. Set a desired $\lambda_0>0$ and number of training epochs $N$. 
\Statex Choose hyperparameters: batch size $b>0$, gradient accumulation steps $gr_{acc}$, learning rates $\alpha_\theta,\alpha_\omega,\alpha_\lambda$, $\ell_2$ regularization weights $\gamma_\theta,\gamma_\omega,\gamma_\lambda$, and exponential weighted minumum update coefficient $\eta < 1$ (e.g. approx $0.9$).

\Statex Initialize trainable parameters $\theta,\omega,\bar{\pmb{\lambda}}\in\mathbb{R}^{m+1}$ and gradient accumulator variables $g_\theta,g_\omega=0$.

\For{$e = 1$ to $N$}
Divide $\DS$ into batches of size $b$: $\text{BS}_k$, $k\in \{1,\dots,K\}$.
    \For{$k=1,\dots,K$}
    \State Calculate and save value system discordance loss $\loss_{\VS}^{\omega,\theta}(\text{BS}_k)$, and grounding discordance losses: , $\loss_{v_i}^{\theta}(\text{BS}_k)$, for each $i\in\{1,\dots,m\}$.
    
    \If{$e=1 \wedge k\leq u$}
    
    \If {$k \bmod u=0$}
    \State $\mathcal{L}_{v_i}^* \gets \sum_{k'=0}^u \mathcal{L}_{v_i}^\theta(\text{BS}_{k'})$\Comment{Initialization of loss targets}
        
    \EndIf
    \State \textbf{continue}
    \EndIf
        
        \State Calculate normalized multipliers:
        \[
        (\lambda_1,\dots,\lambda_m, \lambda_{\VS}) =
        \frac{(m+1)\exp(\bar{\pmb{\lambda}})}
        {\sum_{\bar{\lambda}\in\bar{\pmb{\lambda}}}\exp(\bar{\lambda})}
        (1-\lambda_0) + \lambda_0
        \]
        
        \State \textbf{(1) \underline{Gradient descent}}

\begin{align*}
    \bar{\loss}^{\omega,\theta}_{\bar{\pmb{\lambda}}}(\text{BS}_k) &= \lambda_\VS\mathcal{L}_{\VS}^{\omega,\theta} (\text{BS}_k) + \sum_{i=1}^m\lambda_i  \left(\mathcal{L}^{\theta}_{v_i}(\text{BS}_k)-\mathcal{L}^*_{v_i}\right) + \gamma_\theta\|\theta\|^2 + \gamma_\omega\|\omega\|^2+ \gamma_\lambda\|\bar{\pmb\lambda}\|^2
\end{align*}
        \State Accumulate gradients: $g_\theta\gets g_\theta + \nabla_{\theta}\bar{\loss}^{\omega,\theta}_{\bar{\pmb{\lambda}}}(\text{BS}_k)$;  $g_\omega\gets g_\omega + \nabla_{\omega}\bar{\loss}^{\omega,\theta}_{\bar{\pmb{\lambda}}}(\text{BS}_k)$
        \If{$k$ $\bmod\ gr_{acc} = 0$}
            \State $\theta \gets \theta - \frac{\alpha_{\theta}}{gr_{acc}}\cdot g_\theta$
            \State $\omega \gets \omega - \frac{\alpha_{\omega}}{gr_{acc}}\cdot g_\omega$
            \State Reset gradient accumulators $g_\theta\gets0$, $g_\omega\gets 0$

        \State \textbf{(2) \underline{Gradient ascent on multipliers}}
        
        \[ \bar{\pmb\lambda} \gets \bar{\pmb\lambda} + \alpha_\lambda\nabla_{\bar{\pmb{\lambda}}}\left[
        \sum_{i=1}^m \lambda_i \max\!\left(\sum_{k'=k-gr_{acc}+1}^k\left(\frac{1}{gr_{acc}}{\mathcal{L}^\theta_{v_i}(\text{BS}_{k'})}\right) - \mathcal{L}^*_{v_i},0\right)
        + \gamma_\lambda \|\bar{\pmb{\lambda}}\|^2\right]
        \]
        \EndIf
        \If{$k$ $\bmod\ u = 0$}
            \State \textbf{(3) \underline{Target loss update}}
            \For{$i = 1,\dots,m$}
                \State $\mathcal{L}_{v_i}^* \gets 
                \eta \mathcal{L}^*_{v_i} + (1-\eta)
                \min\!\left(\mathcal{L}^*_{v_i}, \frac{1}{u}\sum_{k'=k-u+1}^k \mathcal{L}_{v_i}^\theta(\text{BS}_{k'})\right)$
            \EndFor
        
        \EndIf

    \EndFor
\EndFor
\end{algorithmic}
\end{algorithm*}

\subsection{Statistical significance tests}
In the main paper, we report several claims regarding the statistical significance of the differences between the test metrics obtained by VSL-RM and SEQ-RM. Here, we provide the detailed results of these statistical tests in Table~\ref{tab:pairwise-testsultra} for the \Ultra\ dataset, and in Table~\ref{tab:pairwise-testspku} for the \PKU\ dataset.

\begin{table*}[t]
    \centering

\begin{tabular}{llllrrl}
\toprule
Method A & Method B & Metric & Test & Statistic $t$ & One-tailed $p$-value & Sig. \\
\midrule
VSL-RM & SEQ-RM & Helpfulness & Welch t-test ($A<B$) & -1.78 & 0.0695 & ns \\
VSL-RM & SEQ-RM & Honesty & Welch t-test ($A<B$) & -0.60 & 0.2857 & ns \\
VSL-RM & SEQ-RM & Truthfulness & Welch t-test ($A<B$) & -2.87 & 0.0142 & * \\
VSL-RM & SEQ-RM & Instruct. Follow. & Welch t-test ($A<B$) & -1.19 & 0.1406 & ns \\
VSL-RM & SEQ-RM & AVG. & Welch t-test ($A<B$) & -4.06 & 0.0046 & ** \\
VSL-RM & SEQ-RM & VSA & Welch t-test ($A>B$) & 13.15 & 0.0000 & *** \\
\bottomrule
\end{tabular}

\caption{Pairwise one-tailed Welch t-tests comparing metrics achieved by VSL-RM and SEQ-RM in the \Ultra dataset: we test $H_1: \mu_{A} > \mu_{B}$ when Method A's sample mean exceeds Method B's, otherwise we test $H_1: \mu_{A} < \mu_{B}$.}\label{tab:pairwise-testsultra}
\end{table*}

\begin{table*}[t]
    \centering

\begin{tabular}{llllrrl}
\toprule
Method A & Method B & Metric & Test & Statistic $t$ & One-tailed $p$-value & Sig. \\
\midrule
VSL-RM & SEQ-RM & Prompt Follow. & Welch t-test ($A>B$) & 1.47 & 0.1050 & ns \\
VSL-RM & SEQ-RM & Object. & Welch t-test ($A<B$) & -3.64 & 0.0055 & ** \\
VSL-RM & SEQ-RM & Clarity & Welch t-test ($A<B$) & -0.38 & 0.3580 & ns \\
VSL-RM & SEQ-RM & Inf. Rich. & Welch t-test ($A>B$) & 0.20 & 0.4263 & ns \\
VSL-RM & SEQ-RM & Safety & Welch t-test ($A<B$) & -4.09 & 0.0053 & ** \\
VSL-RM & SEQ-RM & Avg. Chr. & Welch t-test ($A<B$) & -2.85 & 0.0254 & * \\
VSL-RM & SEQ-RM & Represent. & Welch t-test ($A>B$) & 7.29 & 0.0004 & *** \\
\bottomrule
\end{tabular}

\caption{Pairwise one-tailed Welch t-tests comparing metrics achieved by VSL-RM and SEQ-RM in the \PKU dataset: we test $H_1: \mu_{A} > \mu_{B}$ when Method A's sample mean exceeds Method B's, otherwise we test $H_1: \mu_{A} < \mu_{B}$.}\label{tab:pairwise-testspku}
\end{table*}

\end{document}